\theoremstyle{plain}
\newtheorem{theorem}{Theorem}
\newtheorem{lemma}[theorem]{Lemma}
\newtheorem{corollary}[theorem]{Corollary}
\newtheorem{proposition}[theorem]{Proposition}
\theoremstyle{definition}
\newtheorem{definition}[theorem]{Definition}
\theoremstyle{remark}
\newcommand{\R}{\mathbb R}
\newcommand{\C}{\mathbb C}
\newcommand{\N}{\mathbb N}
\newcommand{\Z}{\mathbb Z}
\newcommand{\Q}{\mathbb Q}
\newcommand{\h}{\mathcal H}
\newcommand{\A}{\mathcal A}
\newcommand{\brt}{{\mathcal G}}
\newcommand{\nbrt}{\tilde{\mathcal G}}
\newcommand{\der}{\mathrm{d}}
\newcommand{\eps}{\varepsilon}
\newcommand{\abs}[1]{\left| #1 \right|}
\newcommand{\aabs}[1]{\left\| #1 \right\|}
\newcommand{\sisus}{\operatorname{int}}
\newcommand{\co}{\operatorname{co}}
\newcommand{\Tr}{\operatorname{tr}}
\newcommand{\diam}{\operatorname{diam}}
\newcommand{\ang}{{\mathrm{ang}}}
\newcommand{\rad}{{\mathrm{rad}}}
\newcommand{\rot}{\mathfrak R}
\newcommand{\radon}{\mathcal R}
\newcommand{\order}{o}
\newcommand{\CDini}{C^{\mathrm{DL}}(\bar{D})}
\newcommand{\DL}{Dini-Lipschitz}
\renewcommand{\phi}{\varphi}
\renewcommand{\theta}{\vartheta}
\begin{document}
\date{\today}
\title{Broken ray tomography in the disk}
\author{Joonas Ilmavirta}
\address{Department of Mathematics and Statistics, University of Jyv\"askyl\"a, P.O.Box 35 (MaD) FI-40014 University of Jyv\"askyl\"a, Finland}
\email{joonas.ilmavirta@jyu.fi}

\begin{abstract}
Given a bounded $C^1$ domain $\Omega\subset\R^n$ and a nonempty subset $E$ of its boundary (set of tomography), we consider broken rays which start and end at points of $E$. We ask: If the integrals of a function over all such broken rays are known, can the function be reconstructed? We give positive answers when $\Omega$ is a ball and the unknown function is required to be uniformly quasianalytic in the angular variable and the set of tomography is open. We also analyze the situation when the set of tomography is a singleton.
\end{abstract}
\keywords{Ray transform, inverse problems, geometric optics}

\subjclass[2010]{78A05, 44A15, 44A12, 42A16}

%\submitto{\IP}

\maketitle

\section{Introduction}

Consider a ray of light travelling to the right on the real axis. If the intensity of the light at a point $x$ is $I(x)$ and the material on the real axis has a non-constant attenuation coefficient $f\geq0$, the differential equation
\begin{equation}
%\label{eq:}
I'(x)=-f(x)I(x)
\end{equation}
is satisfied. Given the initial intensity $I(0)$ and the intensity at a distance $x$, we may easily find (under suitable regularity assumptions) that
\begin{equation}
%\label{eq:}
\log\left(\frac{I(0)}{I(x)}\right)=\int_0^x f(y)\der y.
\end{equation}
That is, by measuring the ratio of initial and final intensity of a beam of light we in effect measure the integral of the attenuation coefficient over its trajectory. This is true even when the trajectory is more complicated than a segment of the real line.

The fundamental problem of (scalar) ray tomography is to deduce knowledge of the attenuation coefficient $f$ from its integrals over a suitable set of trajectories. Different materials (and different wavelengths) have different attenuation coefficients, so information about $f$ can be translated into understanding of the structure of the object being imaged.

Consider a bounded $C^1$ domain $\Omega$ in a Euclidean space and a continuous attenuation coefficient $f:\bar{\Omega}\to\R$.
A piecewise linear path in $\bar{\Omega}$ %starting and ending at the boundary and 
with reflections at the boundary as dictated by geometric optics (the angle of incidence equals the angle of reflection) is called a \emph{broken ray}.
We have a device $E$, which is a nonempty subset of the boundary $\partial\Omega$, from which we may emit light and measure the intensity of incoming light. The rest of the boundary acts as a reflector and we consider those broken rays which start and end at $E$, which we call the \emph{set of tomography}. The corresponding set of integrals of $f$ over such broken rays is called the \emph{broken ray transform of $f$} with the set of tomography $E$. We will implicitly assume all broken rays to start and end at the set of tomography.

We ask the following question: If the integral of $f$ is known over all broken rays with both endpoints in $E$, can the function $f$ be fully reconstructed? How does the result depend on the domain $\Omega$, the set of tomography $E$, and regularity assumptions on the unknown function $f$?

In the case $E=\partial\Omega$ broken rays are simply intersections of lines with the domain $\Omega$. By extending the unknown function $f$ by zero to $\R^n\setminus\bar{\Omega}$, the problem reduces to the classical X-ray transform. This transform is well understood: The answer to the uniqueness question is affirmative, and also regularity and feasible reconstruction algorithms are known. For these results we refer to the textbooks by Helgason~\cite{book-helgason} and Natterer~\cite{book-natterer} and the original works of Radon~\cite{radon} and Cormack~\cite{cormack}.

Reflections from a boundary of a domain naturally arise in connection with billiards. For details we refer to the textbooks by Tabachnikov~\cite{book-billiards} and Chernov and Markarian~\cite{book-cm}. Reflections in connection to the X-ray transform as considered here, however, appear to have been studied less.

Florescu, Markel, and Schotland~\cite{schotland} have studied the broken ray Radon transform in connection with the problem of recovering absorption and scattering coefficients in single scattering tomography.
Eskin~\cite{eskin} has studied the reconstruction of electromagnetic potential in the presence of convex reflecting obstacles.
Lozev~\cite{lozev1} has also studied the numerical aspect of scalar tomography with similar obstacles.
The question asked above has also turned out to be related to the Calder\'on problem with partial data~\cite{KS:calderon}.

It should be noted that in optical tomography the signal is very weak after multiple reflections, and reconstruction schemes should be based on broken rays with only few reflections. One such method is imaging by Compton cameras~\cite{BZG:compton1,BZG:compton2,ADHKK}. The broken ray transform with one reflection is also known as the V-line Radon transform, which has recently been studied and applied to imaging problems~\cite{MNTZ:radon,TN:radon,A:radon}.

The main results presented here require knowledge of the broken ray transform with arbitrarily many reflections, and can therefore not be expected to be practical in optical tomography. Our main motivation is the Calder\'on problem with partial data; on suitable manifolds, reconstructibility of a conductivity from partial Dirichlet to Neumann data can be reduced to the injectivity of the broken ray transform~\cite{KS:calderon}. 
%In this application increasing the number of reflections is tolerable.
The broken ray transform arises similarly in partial data problems for the electromagnetic Schr\"odinger equation~\cite{eskin}.% and the wave equation~\cite{???}.

In this article we consider the special case where $\Omega$ is the unit disk $D\subset\R^2$, and give the following answers:

\begin{theorem}
\label{thm:one}
If $f:\bar{D}\to\R$ is continuous %and satisfies the Dini condition
and the set of tomography is a singleton, then integrals over broken rays uniquely determine the integral of $f$ over any circle centered at the origin.
\end{theorem}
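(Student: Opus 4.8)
The plan is to exploit the explicit structure of billiard trajectories in a disk. Fix the set of tomography $E=\{p\}$ with $p$ at angle $\theta_0$ on $\partial D$. By the law of reflection, every broken ray in $D$ has all of its segments at a common distance from the origin, tangent to a circle $\{|x|=|\cos\phi|\}$, where $2\phi$ is the fixed angular increment between consecutive boundary hits; starting from $p$, such a trajectory returns to $p$ after finitely many reflections exactly when $\phi\in\pi\Q$. Writing $\phi=\pi a/b$ in lowest terms with $a/b\in(0,\tfrac12)$, the broken rays emanating from $p$ are thus the closed star‑polygon paths $\{b/a\}$ with one vertex at $p$: they consist of $b$ chords, each at distance $s=\cos(\pi a/b)$ from the origin, whose perpendicular feet lie at the $b$ equally spaced angles $\theta_0+\phi+2\pi k/b$ ($k=0,\dots,b-1$). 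As $a/b$ ranges over the rationals in $(0,\tfrac12)$, the numbers $s=\cos(\pi a/b)$ form a dense subset of $(0,1)$ with $b$ unbounded.

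Next I would introduce the polar‑coordinate X‑ray transform $h(\psi,s)=\int_{\ell(\psi,s)}f\,\der s$, where $\ell(\psi,s)$ is the chord at distance $s\in[0,1)$ from the origin with perpendicular foot at angle $\psi$; this is continuous on $[0,2\pi]\times[0,1]$ after extending it by $0$ at $s=1$. The integral of $f$ over the broken ray determined by $(a,b)$ is then $\sum_{k=0}^{b-1}h(\theta_0+\phi+2\pi k/b,s)$. Expanding $h(\cdot,s)$ in an angular Fourier series $h(\psi,s)=\sum_{n\in\Z}\widehat h_n(s)e^{in\psi}$ and using that a sum over $b$ equally spaced angles annihilates all Fourier modes except those divisible by $b$, this broken‑ray integral equals $b\sum_{m\in\Z}e^{imb(\theta_0+\phi)}\widehat h_{mb}(s)$. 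Hence, after dividing by $b$, the data hands us $\widehat h_0(s)$ plus the Riemann‑sum error $\tfrac1b\sum_{k=0}^{b-1}h(\theta_0+\phi+2\pi k/b,s)-\widehat h_0(s)$.

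Now let $b\to\infty$ along rationals $a/b\to\phi_*/\pi$ for a prescribed $s_*=\cos\phi_*\in(0,1)$. Since $\cos(\pi a/b)\to s_*$ and $\widehat h_0$ is continuous, it is enough to know the Riemann‑sum error tends to $0$; this follows because $h$ is uniformly continuous on the compact set $[0,2\pi]\times[0,1]$, so the error is bounded by the modulus of continuity of $h$ in the $\psi$‑variable, uniformly in $s$, evaluated at $2\pi/b\to0$. Therefore the broken‑ray data determine $\widehat h_0(s)=\frac1{2\pi}\int_0^{2\pi}h(\psi,s)\,\der\psi$ for every $s\in(0,1)$ (and $\widehat h_0(1)=0$). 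But $\widehat h_0$ is exactly the X‑ray transform of the angular average $f_0(r)=\frac1{2\pi}\int_0^{2\pi}f(r,\theta)\,\der\theta$, i.e.\ the Abel‑type transform $\widehat h_0(s)=2\int_s^1 f_0(r)\,r\,(r^2-s^2)^{-1/2}\,\der r$, which is injective on continuous functions. Consequently $f_0$, and hence the circle integral $\int_{\{|x|=r\}}f=2\pi r f_0(r)$, is uniquely determined for every $r$.

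The step I expect to be the main obstacle is the quantitative point in the third paragraph: one must verify that the off‑diagonal modes $\widehat h_{mb}(s)$ with $m\neq0$ truly contribute negligibly as $b\to\infty$, and — crucially — do so uniformly while the radius $s=\cos(\pi a/b)$ is itself varying with $b$. This is what forces the argument through the joint uniform continuity of the polar X‑ray transform $h$, rather than a mode‑by‑mode decay estimate, which for a merely continuous $f$ need not be summable. The remaining ingredients — the billiard geometry of the disk, the Fourier aliasing identity, and the inversion of the Abel transform — are classical.
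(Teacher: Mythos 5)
Your proposal is correct and takes essentially the same route as the paper's proof: closed broken rays through the singleton with rational rotation number $a/b$, a long-trajectory limit $b\to\infty$ controlled by the uniform continuity of the data (your Riemann-sum error bound), identification of the limit with the Abel-type transform $\A_0 a_0$ of the angular average, and inversion of that transform. The only difference is cosmetic: you run the equidistribution estimate at the level of the chord X-ray transform $h(\psi,s)$, summing over the equally spaced perpendicular feet, while the paper averages $f$ over the equally spaced intersection points on each circle $S_r$ and then integrates in $r$ --- a Fubini-level repackaging of the same argument.
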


\begin{corollary}
\label{cor:one}
Let $B^n$ denote the unit ball in $\R^n$. If $n\geq 2$, $f:\bar{B}^n\to\R$ is continuous, %and satisfies the Dini condition,
and the set of tomography is a singleton, then integrals over broken rays uniquely determine 
%the value of $f$ at the origin and 
the integral of $f$ over any circle centered at the origin with the singleton in the circle's plane.
\end{corollary}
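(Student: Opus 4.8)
The plan is to reduce the $n$-dimensional statement to the two-dimensional Theorem~\ref{thm:one} by slicing with a plane. Fix the singleton set of tomography $\{p\}$, $p\in\partial B^n$, and let $C$ be any circle centered at the origin whose plane contains $p$; denote that plane by $V$. Since $V$ is a two-dimensional linear subspace of $\R^n$, the set $D_V\coloneqq V\cap B^n$ is a unit disk with $p\in\partial D_V$, and $C\subset D_V$ is a circle centered at the origin.

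First I would verify that the broken rays of $B^n$ which happen to lie in $V$ are precisely the broken rays of the disk $D_V$ with set of tomography $\{p\}$. The only nontrivial point is the reflection law: at a point $x\in V\cap\partial B^n$ the outer unit normal to $\partial B^n$ is $x$ itself, which lies in $V$, so the reflection of any vector of $V$ across the tangent hyperplane of $\partial B^n$ again lies in $V$ and coincides with the two-dimensional reflection across $\partial D_V$. Hence a piecewise linear path in $V$ that starts and ends at $p$ and obeys geometric optics along $\partial D_V$ is a genuine broken ray in $B^n$ with both endpoints in $\{p\}$, and the integral of $f$ over it is part of the given data.

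Then I would apply Theorem~\ref{thm:one} to the continuous function $f|_{\bar D_V}$ on $D_V$ with the singleton set of tomography $\{p\}$: it shows that the integrals of $f$ over broken rays in $D_V$ determine $\int_C f$. Combined with the previous paragraph, those integrals are all known, so $\int_C f$ is determined. The case $n=2$ is exactly Theorem~\ref{thm:one}.

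The whole argument is essentially bookkeeping, so I do not anticipate a serious obstacle; the one step that genuinely needs checking is the consistency claim that confining a broken ray to $V$ does not conflict with the reflection law in the ambient ball, which is the normal-vector computation above. One should also remark that circles of radius $0$ or $1$ are either trivial or handled directly by Theorem~\ref{thm:one}, so the phrase ``any circle centered at the origin'' raises no additional issue.
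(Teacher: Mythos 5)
Your proposal is correct and follows essentially the same route as the paper: slice the ball with the plane through the origin and the singleton, observe that planar broken rays in that slice are broken rays of the ambient ball (your normal-vector check makes explicit what the paper asserts about the geometry), and apply Theorem~\ref{thm:one} to the restriction of $f$ to the resulting unit disk.
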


Full reconstruction is guaranteed for functions that are uniformly quasianalytic in the angular variable in the following theorem and its corollary. Uniformly quasianalytic functions are continuous in the radial variable and quasianalytic\footnote{For example, real-analytic functions are quasianalytic.} in the angular variable with a uniformity condition given in definition~\ref{def:qa}. There is, in particular, great freedom in the radial dependence of the unknown function. Assuming quasianalyticity in both variables makes the problem much easier, as demonstrated in corollary~\ref{cor:anal}.

\begin{theorem}
\label{thm:open}
If $f:\bar{D}\to\R$ is uniformly quasianalytic in the angular variable in the sense of definition~\ref{def:qa} and its angular derivatives of all orders satisfy the \DL\ condition and the set of tomography is open, then integrals over broken rays uniquely determine $f$ everywhere in the disk.
\end{theorem}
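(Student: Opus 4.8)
The plan is to use the angular quasianalyticity to reduce the problem to the classical two-dimensional X-ray transform. By linearity it suffices to show that if the broken ray transform of $f$ vanishes then $f\equiv 0$. Recall that for billiards in the disk the angle subtended at the center by each leg of a trajectory is conserved, so a broken ray is, up to rotation, described by this angle $\psi\in(0,2\pi)$ together with a starting point $e^{i\theta_0}$, and its successive boundary points are $e^{i\theta_0},e^{i(\theta_0+\psi)},e^{i(\theta_0+2\psi)},\dots$. Fix $\psi$ with $\psi/(2\pi)$ irrational. Then rotation by $\psi$ is minimal on the circle, so starting from any $\theta_0\in E$ the orbit returns to $E$ for the first time after some finite number $m(\theta_0)$ of steps, and the resulting polygonal path is an admissible broken ray: its interior reflection points $e^{i(\theta_0+k\psi)}$, $1\le k\le m(\theta_0)-1$, lie in $\partial D\setminus E$, and its two endpoints lie in $E$. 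Writing $F(\mu,\psi)$ for the integral of $f$ over the chord subtending the angle $\psi$ whose endpoints have angular midpoint $\mu$, the broken ray transform of $f$ along this path equals $B(\theta_0)=\sum_{k=0}^{m(\theta_0)-1}F\bigl(\theta_0+(k+\tfrac12)\psi,\psi\bigr)$, and the hypothesis says $B(\theta_0)=0$ for all $\theta_0\in E$.

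The next step is to upgrade this to vanishing for all $\theta_0$. The integer-valued function $m(\cdot)$ is constant, say equal to $m_0$, on some open arc $I\subseteq E$, by a Baire category argument (its level sets are relatively open minus relatively open, and they cover $E$). On $I$ the broken ray transform coincides with the finite sum $g(\theta_0):=\sum_{k=0}^{m_0-1}F\bigl(\theta_0+(k+\tfrac12)\psi,\psi\bigr)$, which is defined for every $\theta_0\in\R$. Here the hypotheses are used: since $f$ is uniformly quasianalytic in the angular variable in the sense of definition~\ref{def:qa} and its angular derivatives of all orders satisfy the \DL\ condition, one may differentiate the chord integral $F(\cdot,\psi)$ under the integral sign — it is simply $f$ integrated along a fixed chord while that chord is rotated — and the resulting bounds place $F(\cdot,\psi)$, uniformly in $\psi$, in the same quasianalytic class as $f(r,\cdot)$; hence $g$ lies in that class too. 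A nonzero function in a quasianalytic class cannot vanish on a nonempty open set, so $g$, being zero on $I$, vanishes identically: $\sum_{k=0}^{m_0-1}F\bigl(\theta_0+(k+\tfrac12)\psi,\psi\bigr)=0$ for every $\theta_0\in\R$.

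Now expand $F(\cdot,\psi)=\sum_{n\in\Z}c_n(\psi)\,e^{in\cdot}$, which is legitimate since $f$, hence $F(\cdot,\psi)$, is continuous; reading off the Fourier coefficients in $\theta_0$ of the identity just obtained gives $c_n(\psi)\,e^{in\psi/2}\,\frac{e^{im_0n\psi}-1}{e^{in\psi}-1}=0$ for every $n\ne 0$ and $m_0\,c_0(\psi)=0$. Since $\psi/(2\pi)$ is irrational the geometric factor is never zero, so $c_n(\psi)=0$ for all $n$ and therefore $F(\cdot,\psi)\equiv 0$. As $\psi$ ranges over the dense set of $\psi\in(0,2\pi)$ with $\psi/(2\pi)$ irrational, and $F$ depends continuously on the chord, we conclude that $f$ integrates to zero over every chord of $D$. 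Extending $f$ by zero outside $\bar D$, this says that the two-dimensional X-ray transform of a continuous compactly supported function vanishes, whence $f\equiv 0$ by injectivity of the X-ray transform (for instance via the Fourier slice theorem, or \cite{book-helgason,book-natterer}). The one genuinely delicate point is the middle paragraph: the data gives vanishing of $B$ only on the open set $E$ and the return time $m$ is not continuous, so some rigidity is required to globalize — this is exactly what the uniform angular quasianalyticity and the \DL\ conditions provide; once $F(\cdot,\psi)$ is known to lie in a quasianalytic class, everything else is routine.
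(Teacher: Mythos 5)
Your argument is correct, and it reaches the conclusion by a genuinely different route than the paper. The paper decomposes $f$ into angular Fourier modes, uses a symmetric trajectory ($\iota_\gamma+\kappa_\gamma=0$) with $\alpha_\gamma/\pi$ irrational to get $S_k(\gamma)\A_k a_k(z_\gamma)=0$ via lemmas~\ref{lma:S_k} and~\ref{lma:qa}, then inverts the generalized Abel transforms $\A_k$ (Cormack) mode by mode, needs theorem~\ref{thm:one} for the zeroth mode, and finishes with an antisymmetry argument using two axes through $E$. You instead fix one irrational rotation angle $\psi$, take the first-return broken rays from an arc of $E$ on which the return time is constant (your Baire argument is fine; even more simply, the set where the return time attains its minimum over $E$ is already open), globalize the vanishing of the resulting sum $g$ in the starting angle by quasianalyticity, and then divide out the geometric factor $(e^{im_0n\psi}-1)/(e^{in\psi}-1)$, which is nonzero for $n\neq0$ precisely because $\psi/2\pi$ is irrational (the analogue of lemma~\ref{lma:S_k}(3)); density in $\psi$, continuity of chord integrals, and injectivity of the classical X-ray transform then finish. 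Your globalization step is in fact the mechanism of the paper's footnote, where $h(\phi)=\nbrt\rot_\phi f(\gamma)$ vanishes on an open interval by lemma~\ref{lma:rot}(3) and hence identically; but your overall structure avoids theorem~\ref{thm:one}, the per-mode Abel inversion, and the two-axis step, at the price of invoking Radon/X-ray injectivity (or Helgason's support theorem) at the end. The only place where you are no more rigorous than the paper is the claim that $F(\cdot,\psi)$, and hence $g$, lies in the quasianalytic class: the cleanest justification is that the $n$th Fourier coefficient of $F(\cdot,\psi)$ equals $\A_{\abs{n}}$ applied to the $n$th angular Fourier coefficient of $f$ evaluated at $\cos(\psi/2)$ (the single-chord case of lemma~\ref{lma:calc}), combined with $\aabs{\A_k}\leq2$ from lemma~\ref{lma:Abel}(3) and the uniformity in $r\geq\abs{\cos(\psi/2)}$ from definition~\ref{def:qa} --- exactly the content of lemma~\ref{lma:qa}; your route via differentiation under the integral sign also works (Minkowski's integral inequality gives the needed $L^2$ bounds), and the \DL\ hypothesis on the angular derivatives is what licenses the termwise differentiation, which is also how the paper uses that assumption.
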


\begin{corollary}
\label{cor:open}
Theorem~\ref{thm:open} also holds when $D$ is replaced by the unit ball in any Euclidean space of dimension two or higher.
\end{corollary}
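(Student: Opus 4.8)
The plan is to reduce to the two-dimensional statement of Theorem~\ref{thm:open} by slicing the ball with two-dimensional linear subspaces, in the same spirit as the reduction underlying Corollary~\ref{cor:one}. Write $B=B^n$ and let $E\subset S^{n-1}=\partial B$ be the given open nonempty set of tomography. The geometric point is that if $P\subset\R^n$ is any two-dimensional linear subspace, then $D_P:=P\cap B$ is a unit disk in $P$, and every broken ray of $D_P$ (reflecting off $\partial D_P=P\cap S^{n-1}$) is also a broken ray of $B$. Indeed, at a point $q\in P\cap S^{n-1}$ the outer unit normal to $S^{n-1}$ is $q$ itself, which lies in $P$; hence the reflection law in $B$ sends a line in $P$ to a line in $P$, so a piecewise linear path contained in $P$ that reflects correctly off $\partial D_P$ reflects correctly off $\partial B$ and never leaves $P$. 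Therefore the broken ray transform of $f$ on $B$ with set of tomography $E$ determines, for each such $P$, the broken ray transform of the restriction $f|_{D_P}$ on $D_P$ with set of tomography $E\cap P$; and $E\cap P=E\cap(P\cap S^{n-1})$ is an open subset of the circle $\partial D_P$ because $E$ is open in $S^{n-1}$, and it is nonempty whenever $P$ meets $E$.

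Next I would verify that $f|_{D_P}$ inherits the hypotheses of Theorem~\ref{thm:open}. The angular variable of $D_P$ is just an arc-length parameter along the great circle $P\cap S^{n-1}$, and the regularity required of $f$ here (uniform quasianalyticity in the angular variable in the sense of Definition~\ref{def:qa}, together with the \DL\ condition on all angular derivatives) is, by the way that hypothesis is set up for the ball, stable under restriction to great circles through the origin; checking this is a routine unwinding of the definitions once the constants are tracked. Granting it, Theorem~\ref{thm:open} applied in the plane $P$ recovers $f$ on all of $D_P$ from the sliced data.

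It remains to cover $\bar B^n$ by such slices. Fix $p\in E$. For $x\in\bar B^n$ with $x\notin\R p$, the subspace $P=\operatorname{span}\{x,p\}$ is two-dimensional, contains $x$, and contains $p\in E$, so $f(x)$ is determined by the previous two steps; if $x\in\R p\setminus\{0\}$ one instead uses a nearby $p'\in E$ with $x\notin\R p'$, which exists since $E$ is open, and $f(0)$ is obtained from any two-plane meeting $E$ (or as a limit by continuity of $f$). Hence $f$ is determined everywhere in $\bar B^n$. The only step that is not purely elementary is the transfer of hypotheses in the second paragraph: one must make sure that "uniformly quasianalytic in the angular variable" on the ball, with the \DL\ condition on angular derivatives, is formulated so that it genuinely descends to restrictions to two-dimensional subspaces through the origin; the slicing geometry and the covering argument are otherwise completely routine.
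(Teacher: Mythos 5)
Your argument is correct and is essentially the paper's own proof: both reduce to Theorem~\ref{thm:open} by slicing with two-dimensional subspaces through the origin that meet $E$, noting that broken rays of $\bar B^n\cap F$ are broken rays of $\bar B^n$ and that $E\cap F$ is open in the great circle, and then covering the ball by such slices. The hypothesis transfer you flag as the one nontrivial point is in fact built into Definition~\ref{def:qa}, which defines uniform quasianalyticity in higher dimensions precisely via restrictions to two-dimensional planes through the origin, so nothing further needs to be checked there.
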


We need not assume any regularity of $f$ at the origin to guarantee reconstruction in the punctured disk $\bar{D}\setminus\{0\}$ in theorems~\ref{thm:one} and~\ref{thm:open}. If we assume continuity at the origin, theorem~\ref{thm:one} implies that $f(0)$ can be reconstructed.

One class of functions satisfying the regularity assumptions of theorem~\ref{thm:open} are functions of the form (in polar coordinates)
\begin{equation}
%\label{eq:}
f(r,\theta)=a_0(r)+\sum_{k=1}^K (a_k(r)\cos(k\theta)+b_k(r)\sin(k\theta)),
\end{equation}
where $K\in\N$ and the functions $a_k,b_k$ are H\"older continuous.%satisfy the \DL\ condition.

We also show in lemma~\ref{lma:rot} that if theorem~\ref{thm:open} would hold for all functions which are $C^\infty$ in the angular variable, then it would also hold for all continuous functions. Whether theorem~\ref{thm:open} indeed holds for all continuous function remains an interesting open question.

We begin by making some general remarks about the nature of the problem and presenting the formalism for the broken ray transform in the disk (section~\ref{sec:prel}). In section~\ref{sec:one-pf} we prove theorem~\ref{thm:one} and corollary~\ref{cor:one}; singleton tomography is discussed in more detail in section~\ref{sec:one}. After developing the necessary tools in sections~\ref{sec:int} and~\ref{sec:tools}, we present the 
proof of theorem~\ref{thm:open} and corollary~\ref{cor:open} in section~\ref{sec:open}.
%proofs of theorems~\ref{thm:one} and~\ref{thm:open} and their corollaries in sections~\ref{sec:one} and~\ref{sec:open}.

\section{Preliminaries}
\label{sec:prel}

\subsection{Some initial remarks}

Let us first consider the general problem with a bounded $C^1$ Euclidean domain $\Omega$.
The problem is linear in the function $f$, so we may reformulate the uniqueness question of the introduction as follows: If the integral of $f$ over all broken rays vanishes, is $f$ identically zero?

Linearity also causes the problem for a complex-valued $f$ to reduce to the real-valued case: the problems of finding the real and imaginary parts of $f$ decouple.

Decreasing the size of $E$ makes the problem more difficult by decreasing the number of broken rays available for reconstruction. If it is convenient to consider some trajectories that have reflections also on $E$, we may do so. Summing the integrals over each part of the trajectory starting and ending at $E$, we may construct the integral over a trajectory with reflections on $E$. Thus the only requirement for the broken rays of interest is that their endpoints lie in the set of tomography $E$.

The assumption of $C^1$ regularity of $\partial\Omega$ is necessary to make sense of the reflections everywhere. A natural assumption on the unknown function $f:\bar{\Omega}\to\R$ is continuity. Corollary~\ref{cor:anal} demonstrates that too strong regularity assumptions make the problem very easy.

%; it could be loosened, but is already physically reasonable. Corollary~\ref{cor:anal} demonstrates that too heavy assumptions make the problem meaningless: global behaviour of $f$ should not be inferable from local behaviour.\footnote{If an X-ray image of a toe shows that the toe is healthy, must the heart be healthy, too?}

Convexity of $\Omega$ is not required in the following proposition, but we keep the assumption to make the proof simpler.

\begin{proposition}
\label{prop:spt}
%Let $\Omega\subset\R^n$, $n\geq2$, be a bounded convex domain. If $f:\bar{\Omega}\to\R$ is continuous and the integral of $f$ over all broken rays vanishes, then $f$ vanishes in the interior of the convex hull of each connected component of the set of tomography~$E$.
Let $\Omega\subset\R^n$, $n\geq2$, be a bounded convex domain. If $f:\bar{\Omega}\to\R$ is continuous and the integral of $f$ over all broken rays vanishes, then $f$ vanishes
in the exterior of the convex hull of the complement of the union of the convex hulls of the components of the set of tomography~$E$.
This set is an intersection of $\Omega$ with a neighborhood of $E$ in $\bar{\Omega}$ if $\Omega$ is strictly convex and $E\subset\partial\Omega$ is open.
\end{proposition}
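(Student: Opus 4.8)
The plan is to reduce the assertion to the support theorem for the classical X-ray transform (Helgason, see~\cite{book-helgason}). First I would extend $f$ by zero to $\R^n$, writing $\tilde f$ for the extension: it is bounded with $\spt\tilde f\subset\bar\Omega$. Since the extension is usually discontinuous across $\partial\Omega$, I would invoke a form of the support theorem valid for compactly supported integrable functions, or note that only vanishing in the interior of the relevant set is being claimed and argue there.

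Next I would unwind the set $C$ appearing in the conclusion — the convex hull of $\bar\Omega\setminus\bigcup_i\co(E_i)$ — and identify it, in the strictly convex case (and, with no convexity refinement, in dimension two), with the simpler set $\co(\partial\Omega\setminus E)$. The key observation is that $\co(E_i)\cap\partial\Omega=E_i$ for each component $E_i$: a point of $\co(E_i)$ that lies on $\partial\Omega$ is an extreme point of $\bar\Omega$, hence an extreme point of $\co(E_i)$, hence one of the points being combined, hence in $E_i$. Combining this with a hyperplane-separation argument (given $x\in\bar\Omega$ outside all $\co(E_i)$ and outside $\co(\partial\Omega\setminus E)$, the boundary cap of $\bar\Omega$ cut off on the $x$-side is connected and lies in $E$, hence in one component, forcing $x$ into that component's convex hull — a contradiction) gives $C=\co(\partial\Omega\setminus E)$. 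So in these cases the conclusion reads: \emph{$f$ vanishes on $\bar\Omega\setminus\co(\partial\Omega\setminus E)$} (in the disk with $E$ an arc this exterior is exactly the interior of the circular segment $\co(E)$).

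The main argument is then short. Let $L$ be a line with $L\cap\co(\partial\Omega\setminus E)=\emptyset$. If $L$ misses $\bar\Omega$ then $\int_L\tilde f=0$. Otherwise $L\cap\bar\Omega=[a,b]$ with $a,b\in\partial\Omega$; since $\co(\partial\Omega\setminus E)\supset\partial\Omega\setminus E$ and $a,b\notin\co(\partial\Omega\setminus E)$, both endpoints lie in $E$. As $\Omega$ is convex, $[a,b]\subset\bar\Omega$, and with both endpoints in the set of tomography it is a broken ray with no reflections, so $\int_L\tilde f=\int_{[a,b]}f=0$. Thus $\tilde f$ integrates to zero over every line missing the compact convex set $\co(\partial\Omega\setminus E)$, and Helgason's support theorem gives $\spt\tilde f\subset\co(\partial\Omega\setminus E)$, i.e. $f=0$ on $\bar\Omega\setminus\co(\partial\Omega\setminus E)$. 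For the final claim: when $\Omega$ is strictly convex and $E$ is open, $\co(\partial\Omega\setminus E)$ is compact and convex and meets $\partial\Omega$ in exactly the closed set $\partial\Omega\setminus E$ (same extreme-point argument), so $\bar\Omega\setminus\co(\partial\Omega\setminus E)$ is open in $\bar\Omega$, contains $E$, and meets $\partial\Omega$ in exactly $E$; hence it is an open neighborhood of $E$ in $\bar\Omega$, and its intersection with $\Omega$ is the set where vanishing was proved.

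The part I expect to be the main obstacle — and the only place genuine "convexity" is used rather than "strict convexity" — is when $n\ge3$ and $\partial\Omega$ has flat faces of dimension at least two: then $\co(E_i)\cap\partial\Omega$ may properly contain $E_i$, the set $C$ may be strictly smaller than $\co(\partial\Omega\setminus E)$, and a chord missing $C$ can have an endpoint $a\in\co(E_i\cap\Phi)\setminus E$ on such a face $\Phi$. Here one must bring in reflections: unfolding $\Omega$ across the hyperplane through $\Phi$ converts the reflection at $a$ into straight transit, so the segment becomes a link of an honest broken ray (or a limit of such, using continuity of $f$ at the relative-boundary points of the components), and its integral still vanishes. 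I would isolate this in a short lemma. In the disk — the setting of Theorems~\ref{thm:one} and~\ref{thm:open} — none of this is needed and the elementary argument above suffices.
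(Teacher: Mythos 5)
Your core strategy --- reduce to Helgason's support theorem by observing that a chord missing the relevant convex set has both endpoints in $E$ and is therefore a reflection-free broken ray --- is the same as the paper's, and in the strictly convex case with $E$ open your argument is essentially the paper's proof reorganized around $\co(\partial\Omega\setminus E)$ instead of $U=\co(\Omega\setminus\bigcup_i\co E_i)$. But there are two genuine gaps. First, the analytic step you defer --- applying a support theorem to the zero-extension $\tilde f$, which is discontinuous across $\partial\Omega$ --- is exactly the step the paper does not skip: it convolves the extension with a standard mollifier, notes that the integrals of $f_\eps$ vanish over every line missing $B(U,\eps)$, applies Helgason's theorem to $f_\eps\in C_0^\infty(\R^n)$, and then lets $\eps\to0$ at points of $\sisus(\Omega\setminus U)$ using continuity of $f$ there. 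Your alternative ``only vanishing in the interior is claimed, argue there'' does not address the problem, because the hypotheses of the support theorem constrain the function globally, not merely where you want the conclusion; and the ``integrable compactly supported'' version you invoke is true but its standard proof is precisely this mollification, so it must be carried out or cited precisely.

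Second, and more substantially, the first assertion of the proposition concerns an arbitrary bounded convex domain and arbitrary $E$, and there your proof is incomplete by your own admission: when $\partial\Omega$ has flat faces (or when chord endpoints land in $\overline{E_i}\setminus E_i$) the identification of $\co(\Omega\setminus\bigcup_i\co E_i)$ with $\co(\partial\Omega\setminus E)$ fails, the statement you actually prove (vanishing outside $\co(\partial\Omega\setminus E)$) is then weaker than the one claimed, and the proposed repair by unfolding reflections is only a sketch --- you do not show that the unfolded continuation again terminates in $E$, nor how to control iterated reflections or the limiting argument at the relative boundary of the components. Note also that reflections are not the paper's route at all: the paper works with $U$ directly, uses only non-reflecting chords (as it remarks explicitly after Corollary~\ref{cor:anal}), and the geometric content sits in its claim that every line missing $U$ integrates $f$ to zero; if you keep your reformulation, you need either a complete proof of the unfolding lemma or a replacement argument (for instance, computing the integral of $f_\eps$ over each hyperplane missing the convex set via a foliation by lines whose chords do have both endpoints in $E$). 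For the settings actually used later in the paper --- the disk and the ball, strictly convex with $E$ open --- your argument, patched with the mollification step, does suffice.
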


Let $E_i$, $i=1,\dots,m$, be the connected components of $E$. The first claim is that $f$ vanishes in the exterior of $U=\co(\Omega\setminus\bigcup_{i=1}^m\co E_i)$, where $\co A$ is the convex hull of a set~$A$. In dimension two $\Omega\setminus\bigcup_{i=1}^m\co E_i$ is convex, so $f$ vanishes in $\sisus\bigcup_{i=1}^m\co E_i$.

\begin{proof}[Proof of proposition~\ref{prop:spt}]
Let $\eps>0$ and denote by $f_\eps$ the function $f$ convolved with the standard mollifier with support in $B(0,\eps)$.
The integral of $f$ vanishes over any line that does not meet $U$, whence the integral of $f_\eps$ vanishes over any line that does not meet the convex set $B(U,\eps)=\{x\in\R^n:d(x,U<\eps)\}$. Since $f_\eps\in C_0^\infty(\R^n)$, the function $f_\eps$ vanishes outside $B(U,\eps)$ by Helgason's support theorem~\cite[Theorem~2.6]{book-helgason}.

Take any point $y\in\sisus(\Omega\setminus U)$. For all $\eps<d(y,U)$ we know that $f_\eps(y)=0$. Since $f$ is continuous at $y$, $f_\eps(y)$ converges to $f(y)$ as $\eps\to0$. Thus $f$ vanishes in the exterior of $U$ as claimed.

For the second claim, fix any point $a\subset E$. To prove the claim, we find a neighborhood $V$ of $a$ in the topology of $\bar{\Omega}$ such that $V\cap\Omega\subset\sisus(\Omega\setminus U)$.

Let $E_i$ be the component of $E$ containing $a$. By convexity there is a non-degenerate linear form $L:\R^n\to\R$ and a constant $c\in\R$ such that $\Omega\subset\{x\in\R^n:L(x)<c\}$ and $L(a)=c$. By strict convexity of $\Omega$ and openness of $E$ there is $\eps>0$ such that for $V=\{x\in\bar{\Omega}:L(x)>c-\eps\}$ we have $V\cap\partial\Omega\subset E$. Since $\Omega\setminus V$ is convex and $V\subset\co E_i$, we have $U\cap V=\emptyset$. But $V$ is also open, so $V\cap\Omega\subset\sisus(\Omega\setminus U)$.
\end{proof}

\begin{corollary}
\label{cor:anal}
%Suppose the convex hull of a connected component of $E$ contains an interior point of $\Omega$.
Suppose $\Omega\subset\R^n$, $n\geq2$, is a bounded strictly convex domain and the set of tomography $E\subset\partial\Omega$ is open. If $f$ is quasianalytic in $\Omega$ and continuous in $\bar{\Omega}$ and the integral of $f$ over all broken rays vanishes, then $f$ is identically zero.
\end{corollary}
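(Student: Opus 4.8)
The plan is to combine proposition~\ref{prop:spt} with the defining property of quasianalyticity, namely that a quasianalytic function which vanishes on an open set vanishes identically (on a connected domain). Since the integral of $f$ over every broken ray vanishes by hypothesis, proposition~\ref{prop:spt} tells us that $f$ vanishes in the exterior of the set $U=\co(\Omega\setminus\bigcup_{i=1}^m\co E_i)$. The second assertion of proposition~\ref{prop:spt} (applicable because $\Omega$ is strictly convex and $E$ is open) guarantees that this exterior region is nonempty: for each point $a\in E$ there is a neighborhood $V$ of $a$ in $\bar\Omega$ with $V\cap\Omega\subset\sisus(\Omega\setminus U)$, so $f$ vanishes on the nonempty open set $V\cap\Omega\subset\Omega$.

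First I would invoke proposition~\ref{prop:spt} to get that $f$ vanishes on $\sisus(\Omega\setminus U)$, and then use the second half of that proposition to observe that $\sisus(\Omega\setminus U)$ is nonempty (it contains $V\cap\Omega$ for a boundary point $a\in E$, and $E$ is nonempty). Thus $f$ vanishes on a nonempty open subset of the connected open set $\Omega$. Next I would apply the fundamental property of quasianalytic classes: a quasianalytic function on a connected open set that vanishes, together with all its derivatives, at a single point must vanish identically — and in particular a quasianalytic function vanishing on a nonempty open subset is identically zero. Hence $f\equiv 0$ on $\Omega$, and by continuity on $\bar\Omega$ we conclude $f\equiv 0$ on $\bar\Omega$.

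The only point requiring care is the precise notion of "quasianalytic in $\Omega$" being used, and the corresponding unique-continuation statement; this should be spelled out by reference to the Denjoy--Carleman theory (a Denjoy--Carleman class $C^M$ is quasianalytic exactly when $\sum_j M_j/M_{j+1}=\infty$, equivalently Bang's condition, and such classes enjoy the unique continuation property). Since the excerpt defers the definition of quasianalyticity to definition~\ref{def:qa} (which appears later), I would phrase this step as: "by the defining property of quasianalytic functions, a quasianalytic function on the connected domain $\Omega$ that vanishes on a nonempty open subset is identically zero." I do not expect any genuine obstacle here — the corollary is essentially an immediate consequence of proposition~\ref{prop:spt} plus unique continuation, and the work has all been done already in establishing that the vanishing region is a genuinely open, nonempty subset of $\Omega$ rather than merely a boundary strip.
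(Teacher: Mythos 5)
Your proposal is correct and follows essentially the same route as the paper: the paper's proof likewise cites proposition~\ref{prop:spt} (with its second assertion, valid under strict convexity and openness of $E$, supplying a nonempty open vanishing set in $\Omega$) and then concludes by the unique continuation property of quasianalytic functions. Your write-up merely makes explicit the steps the paper compresses into one sentence, including the harmless extension to $\bar\Omega$ by continuity.
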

\begin{proof}
%We may again assume that $E$ is connected. Since $E$ is open and $\Omega$ is strictly convex, $\sisus\co E\neq\emptyset$. By proposition~\ref{prop:spt} $f$ vanishes in this nonempty open subset of $\Omega$, from which the claim follows by quasianalyticity of $f$.
By proposition~\ref{prop:spt} $f$ vanishes in a nonempty open subset of $\Omega$, from which the claim follows by quasianalyticity of $f$.
\end{proof}

Note that in the preceeding proposition and corollary only the non-reflecting trajectories were used, although their union is not even dense in $\Omega$. %The proposition demonstrates, however, that under mild assumptions (such as continuity) Helgason's theorem guarantees that $f$ vanishes in $\co E$ if $E$ is connected (if $E$ is not, $f$ vanishes in $\bar{\Omega}\setminus\co(\partial\Omega\setminus E)$), but does \emph{not} allow us to `see' all of $\Omega$.

\subsection{Formalism for the broken ray transform in the disk}

We use polar coordinates $(r,\theta)$ in the unit disk $D\subset\R^2$ and identify points on the boundary $S_1=\partial D$ with the corresponding angle. The notation $E=\{0\}$ means that $E$ is one point at angle zero.

A broken ray $\gamma$ is characterized by the number of line segments $n_\gamma$ (there are $n_\gamma-1$ reflections), the initial point $\iota_\gamma\in E$, the final point $\kappa_\gamma\in E$, the angle $\alpha_\gamma$ at which a line segment appears when seen from the origin, and an integer winding number $m_\gamma$ such that the trajectory condition
\begin{equation}
\label{eq:trajectory}
n_\gamma \alpha_\gamma = \kappa_\gamma-\iota_\gamma+2\pi m_\gamma
\end{equation}
is satisfied.
These five parameters $(n_\gamma,m_\gamma,\alpha_\gamma,\iota_\gamma,\kappa_\gamma)$ uniquely (and redundantly) define the broken ray $\gamma$.
We require that $n_\gamma>0$ but the other parameters may be negative.
We denote by $\Gamma_E$ the set of all broken rays from $E$ to $E$, allowing intermediate reflections on $E$. We denote by $z_\gamma=\cos(\alpha_\gamma/2)$ and $d_\gamma=2\abs{\sin(\alpha_\gamma/2)}$ the distance from the origin to the trajectory and the length of each individual line segment. The parameters describing the broken ray are illustrated in figure~\ref{fig:br-param}.

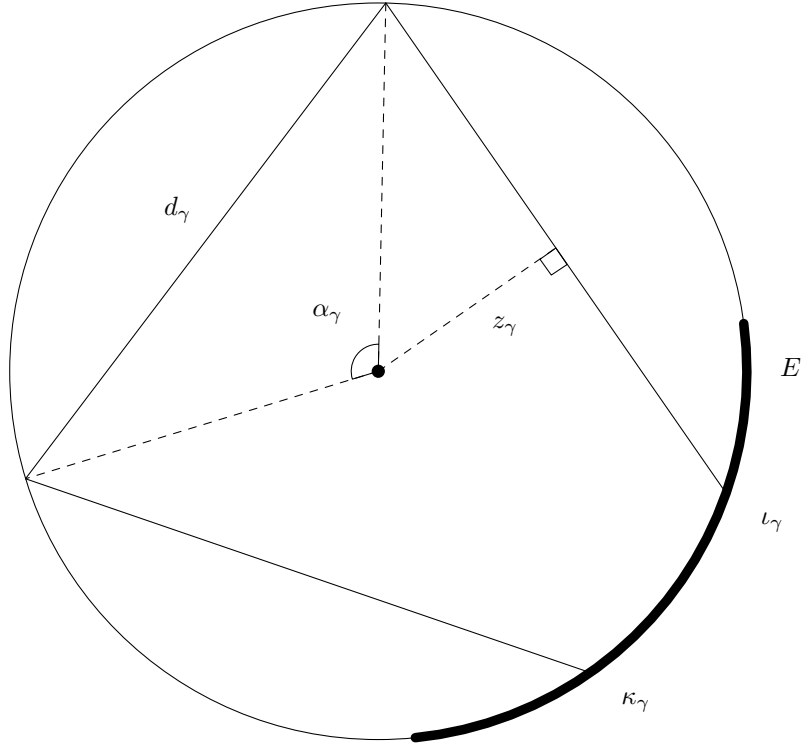
\begin{figure}%
%\begin{center}
\begin{tikzpicture}[line cap=round,line join=round,>=triangle 45,x=1.0cm,y=1.0cm]
\clip(-10.55,-0.11) rectangle (0.48,10.1);
\draw [shift={(-5.39,5.01)}] (0,0) -- (88.78:0.36) arc (88.78:196.93:0.36) -- cycle;
\draw(-3.24,6.5) -- (-3.09,6.29) -- (-2.88,6.43) -- (-3.03,6.65) -- cycle; 
\draw (-6.39,6.02) node[anchor=north west] {$ \alpha_\gamma $};
\draw(-5.39,5.01) circle (4.9cm);
\draw [dash pattern=on 3pt off 3pt] (-5.39,5.01)-- (-3.03,6.65);
\draw (-0.76,3.38)-- (-5.29,9.91);
\draw (-5.29,9.91)-- (-10.08,3.58);
\draw (-10.08,3.58)-- (-2.57,1);
\draw [dash pattern=on 3pt off 3pt] (-5.29,9.91)-- (-5.39,5.01);
\draw [dash pattern=on 3pt off 3pt] (-5.39,5.01)-- (-10.08,3.58);
\draw [shift={(-5.39,5.01)},line width=3.6pt]  plot[domain=-1.47:0.13,variable=\t]({1*4.9*cos(\t r)+0*4.9*sin(\t r)},{0*4.9*cos(\t r)+1*4.9*sin(\t r)});
\draw (-4,5.86) node[anchor=north west] {$ z_\gamma $};
\draw (-0.43,3.25) node[anchor=north west] {$ \iota_\gamma $};
\draw (-2.28,0.86) node[anchor=north west] {$ \kappa_\gamma $};
\draw (-8.36,7.46) node[anchor=north west] {$ d_\gamma $};
\draw (-0.17,5.31) node[anchor=north west] {$ E $};
\begin{scriptsize}
\fill [color=black] (-5.39,5.01) circle (2.5pt);
\end{scriptsize}
\end{tikzpicture}
\caption{An example of a broken ray $\gamma$ in the disk. Here $m_\gamma=1$ and $n_\gamma=3$. The parameters $\alpha_\gamma$, $z_\gamma$, $d_\gamma$, $\iota_\gamma$, and $\kappa_\gamma$ as well as the set of tomography $E$ are illustrated in the figure. The angles $\iota_\gamma$ and $\kappa_\gamma$ are identified with the corresponding points on the boundary.}%
\label{fig:br-param}%
%\end{center}
\end{figure}

We also write
\begin{equation}
%\label{eq:}
\Gamma_E^N = \{\gamma\in\Gamma_E:n_\gamma\geq N\}
\end{equation}
to denote the set of rays which have sufficiently many reflections.

A trajectory $\gamma$ is a (piecewise linear) curve, and we denote its trace by $\Tr(\gamma)$.

Let $C=C(\bar{D};\R)$ be the space of continous real-valued mappings from the closed unit disk, $\CDini\subset C$ the subspace of functions which satisfy the \DL\ condition, and $B(\Gamma_E,\R)$ the space of bounded real-valued mappings from $\Gamma_E$.
The mappings in $B(\Gamma_E,\R)$ need not be linear, continuous, or measurable (the set $\Gamma_E$ is not equipped with any such structure), but merely bounded.
We equip all of these spaces with the supremum norm. The \DL\ condition states that the modulus of continuity $\omega_f$ of a function $f$ satisfy $\omega_f(\delta)=\order(\log^{-1}\delta^{-1})$ for small~$\delta$.
H\"older continuous functions, for example, satisfy the \DL\ condition.

\begin{definition}
For $f\in C$, we define the \emph{broken ray transform of $f$} as the function $\brt f:\Gamma_E\to\R$ such that
\begin{equation}
%\label{eq:}
\brt f(\gamma) = \fint_\gamma f \der\h^1 = \frac{1}{l}\int_0^l f(\gamma(t))\der t,
\end{equation}
when $\gamma:[0,l]\to\bar{D}$ is a broken ray with unit speed.
The map $\brt:C\to B(\Gamma_E,\R)$ is the \emph{broken ray transform}.
%We also define the unnormalized broken ray transform $\nbrt$ such that
We define the unnormalized broken ray transform $\nbrt$ by
\begin{equation}
%\label{eq:}
\nbrt f(\gamma) = \int_\gamma f \der\h^1.
\end{equation}
\end{definition}

%If $f\in C$, then $\brt f$ as defined above is indeed an element of $B(\Gamma_E,\R)$.
Elementary calculations show that $\brt$ is linear and continuous. Using a constant function as an example, one may show that moreover $\aabs{\brt}=1$. The unnormalized version, however, is not continuous for generic $E$: the unnormalized broken ray transform of a constant function is not bounded, as can be seen by increasing the length of the trajectory indefinitely.

\subsection{Proof of theorem 1}
\label{sec:one-pf}

We are now ready to prove theorem~\ref{thm:one}. A Fourier analytic proof for the special case when $f\in\CDini$ is given in section~\ref{sec:one}, where the result is also discussed in more detail.

\begin{proof}[Proof of theorem~\ref{thm:one}]
We denote circles centered at the origin by $S_r=S(0,r)$.
%As in lemma~\ref{lma:Fourier}, we denote the average of $f$ over such circles by $a_0(r)=\fint_{S_r}f\der\h^1$.
We denote the average of $f$ over such circles by $a_0(r)=\fint_{S_r}f\der\h^1$.
%By corollary~\ref{cor:calc} it suffices to consider those $\gamma$ that have $\gcd(n_\gamma,m_\gamma)=1$.
It suffices to consider those $\gamma$ that have $\gcd(n_\gamma,m_\gamma)=1$.
We may assume that $\iota_\gamma=\kappa_\gamma=0$.

%A simple change of variable as in the proof of lemma~\ref{lma:calc} yields
Denoting $\phi_l^\pm(r)=(l-\frac{1}{2})\alpha_\gamma\pm\arccos(z_\gamma/r)$, we have
\begin{equation}
%\label{eq:}
\begin{split}
\nbrt f(\gamma)
&=
%\int_\gamma f\der\h^1
%=
%\sum_{l=1}^{n_\gamma}\int_{-\alpha_\gamma/2}^{\alpha_\gamma/2}f\left(\frac{\cos(\alpha_\gamma/2)}{\cos(\phi)},(l-\frac{1}{2})\alpha_\gamma+\phi\right)\frac{\cos(\alpha_\gamma)}{\cos^2(\phi)}\der\phi
\sum_{l=1}^{n_\gamma}\int_{z_\gamma}^1 (f(r,\phi_l^+(r))+f(r,\phi_l^-(r)))\frac{r}{\sqrt{r^2-z_\gamma^2}}\der r
\\&=
\frac{d_\gamma}{2}
\int_{z_\gamma}^1 p_{z_\gamma}(r) \sum_{y\in S_r\cap\Tr(\gamma)} f(y)\der r,
\end{split}
\end{equation}
where we have defined the function $p_z:(z,1]\to\R$ for any $z\in(0,1)$ such that
\begin{equation}
%\label{eq:}
p_z(r)=\left((1-z^2)(1-z^2/r^2)\right)^{-1/2}.
\end{equation}
Thus
\begin{equation}
%\label{eq:}
\brt f(\gamma)
=
\frac{1}{n_\gamma d_\gamma}\nbrt f(\gamma)
=
\int_{z_\gamma}^1 p_{z_\gamma}(r) \frac{1}{2n_\gamma}\sum_{y\in S_r\cap\Tr(\gamma)} f(y)\der r.
\end{equation}
The function $p_{z_\gamma}$ has a clear geometric interpretation: it is the probability distribution for the distance from the origin on the trajectory $\gamma$. Therefore we expect that $\brt f(\gamma)$ approaches $g(z_\gamma)$,
\begin{equation}
\label{eq:g-def}
g(z) = \int_z^1 p_z(r) a_0(r) \der r,
\end{equation}
when $n_\gamma\to\infty$ in a suitable sense.

Let us fix some $\gamma\in\Gamma_E^3$. The set $S_r\cap\Tr(\gamma)$ has $2n_\gamma$ elements for almost every $r\in(z_\gamma,1)$. For any such $r$ we may write
\begin{equation}
%\label{eq:}
S_r\cap\Tr(\gamma) = \{x_1^{\gamma,r},\dots,x_{2n_\gamma}^{\gamma,r}\}
\end{equation}
and assume that the points $x_i^{\gamma,r}$ are numbered clockwise on $S_r$.
%We may pick any point in $S_r\cap\Tr(\gamma)$ to be $x_1^{\alpha,r}$ and let the rest be numbered clockwise from that point.
%We may assume that the points $x_i^{\alpha,r}$ are numbered clockwise from any point in $S_r$.
We split $S_r$ to $2n_\gamma$ pairwise disjoint arcs $I_i^{\gamma,r}$ such that each arc corresponds to the angle $\pi/n_\gamma$ and $x_i^{\gamma,r} \in I_i^{\gamma,r}$.
This can be done since the points $x_i^{\gamma,r}$ are pairwise uniformly distributed on $S_r$ in the sense that a relabeling $x_i^{\gamma,r} \mapsto x_{i+2}^{\gamma,r}$ corresponds to a rotation by an angle $2\pi/n_\gamma$. The length of each arc $I_i^{\gamma,r}$ is  $\h^1(I_i^{\gamma,r})=\pi r/n_\gamma$, so the diameter is $\diam(I_i^{\gamma,r})<\pi r/n_\gamma$.
The points $x_i^{\gamma,r}$ and arcs $I_i^{\gamma,r}$ are illustrated in figure~\ref{fig:proof-one}.

\begin{figure}%
%\begin{center}
\begin{tikzpicture}[line cap=round,line join=round,>=triangle 45,x=1.0cm,y=1.0cm]
\clip(-1.77,-7.17) rectangle (11.92,5.5);
\draw(4.42,-0.74) circle (5.7cm);
\draw(4.42,-0.74) circle (3.39cm);
\draw (10.12,-0.67)-- (-0.23,2.55);
\draw (-0.23,2.55)-- (6.25,-6.14);
\draw (6.25,-6.14)-- (6.11,4.7);
\draw (6.11,4.7)-- (-0.15,-4.15);
\draw (-0.15,-4.15)-- (10.12,-0.67);
\draw (10.38,-0.92) node[anchor=north west] {$E$};
\draw (6.79,-0.09) node[anchor=north west] {$x_1^{\gamma,r}$};
\draw [shift={(4.42,-0.74)},line width=3.6pt]  plot[domain=0.01:0.64,variable=\t]({1*3.39*cos(\t r)+0*3.39*sin(\t r)},{0*3.39*cos(\t r)+1*3.39*sin(\t r)});
\draw (7.75,0.77) node[anchor=north west] {$I_1^{\gamma,r}$};
\draw (7.83,-1.67) node[anchor=north west] {$x_2^{\gamma,r}$};
\draw (6.45,-3.70) node[anchor=north west] {$x_3^{\gamma,r}$};
\draw (6.38,2.52) node[anchor=north west] {$x_{10}^{\gamma,r}$};
\begin{scriptsize}
\fill [color=black] (4.42,-0.74) circle (2.5pt);
\fill [color=black] (10.12,-0.67) circle (4.5pt);
\fill [color=black] (7.14,1.29) circle (2.0pt);
\fill [color=black] (1.7,-2.77) circle (2.0pt);
\fill [color=black] (3.41,-3.98) circle (2.0pt);
\fill [color=black] (5.43,2.5) circle (2.0pt);
\fill [color=black] (3.33,2.47) circle (2.0pt);
\fill [color=black] (5.51,-3.95) circle (2.0pt);
\fill [color=black] (7.19,-2.7) circle (2.0pt);
\fill [color=black] (1.65,1.22) circle (2.0pt);
\fill [color=black] (7.81,-0.7) circle (2.0pt);
\fill [color=black] (1.03,-0.78) circle (2.0pt);
\draw [color=black] (7.71,0.08)-- ++(-2.5pt,-2.5pt) -- ++(5.0pt,5.0pt) ++(-5.0pt,0) -- ++(5.0pt,-5.0pt);
\draw [color=black] (2.18,1.8)-- ++(-2.5pt,-2.5pt) -- ++(5.0pt,5.0pt) ++(-5.0pt,0) -- ++(5.0pt,-5.0pt);
\draw [color=black] (6.22,-3.62)-- ++(-2.5pt,-2.5pt) -- ++(5.0pt,5.0pt) ++(-5.0pt,0) -- ++(5.0pt,-5.0pt);
\draw [color=black] (6.15,2.18)-- ++(-2.5pt,-2.5pt) -- ++(5.0pt,5.0pt) ++(-5.0pt,0) -- ++(5.0pt,-5.0pt);
\draw [color=black] (4.66,2.64)-- ++(-2.5pt,-2.5pt) -- ++(5.0pt,5.0pt) ++(-5.0pt,0) -- ++(5.0pt,-5.0pt);
\draw [color=black] (1.31,-2.09)-- ++(-2.5pt,-2.5pt) -- ++(5.0pt,5.0pt) ++(-5.0pt,0) -- ++(5.0pt,-5.0pt);
\draw [color=black] (1.28,0.53)-- ++(-2.5pt,-2.5pt) -- ++(5.0pt,5.0pt) ++(-5.0pt,0) -- ++(5.0pt,-5.0pt);
\draw [color=black] (4.74,-4.12)-- ++(-2.5pt,-2.5pt) -- ++(5.0pt,5.0pt) ++(-5.0pt,0) -- ++(5.0pt,-5.0pt);
\draw [color=black] (2.24,-3.34)-- ++(-2.5pt,-2.5pt) -- ++(5.0pt,5.0pt) ++(-5.0pt,0) -- ++(5.0pt,-5.0pt);
\draw [color=black] (7.73,-1.48)-- ++(-2.5pt,-2.5pt) -- ++(5.0pt,5.0pt) ++(-5.0pt,0) -- ++(5.0pt,-5.0pt);
\end{scriptsize}
\end{tikzpicture}
\caption{Illustration of the points $x_i^{\gamma,r}$ and arcs $I_i^{\gamma,r}$ in the proof of theorem~\ref{thm:one}. In this example $m_\gamma=2$, $n_\gamma=5$, and $r\approx0.6$. The points $x_i,\dots,x_{10}$ on $S_r$ are denoted by crosses and the endpoints of the arcs $I_i^{\gamma,r}$ by dots. For clarity, only one arc and few points are labeled. The singleton $E$ is highlighted at the boundary $\partial D$. The choice of the arcs $I_i^{\gamma,r}$ presented here is only one of many possibilities.}%
\label{fig:proof-one}%
%\end{center}
\end{figure}
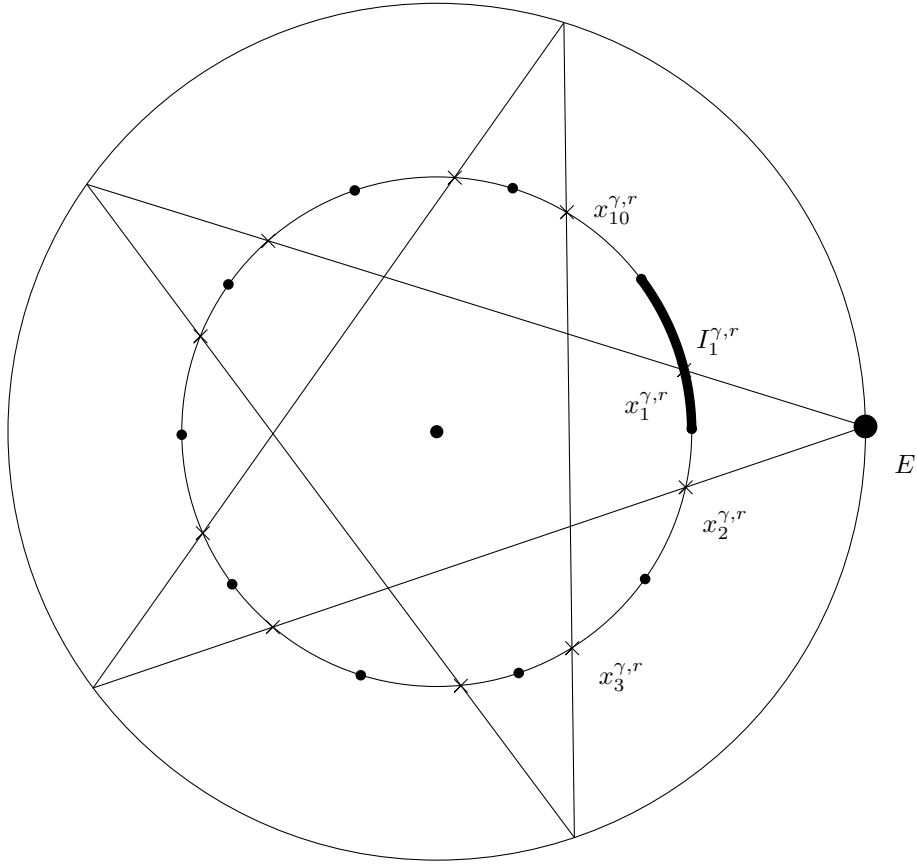

Let $\eps>0$. Since $f$ is continuous in the compact set $\bar{D}$, it is uniformly continuous and there is $\delta>0$ such that $\abs{f(x)-f(y)}<\eps$ whenever $\abs{x-y}<\delta$.
%If we choose $N\in\N$ with $\pi r/N<\delta$, then
If we choose $\gamma$ so that $n_\gamma>\pi/\delta$, then $\pi r/n_\gamma<\delta$ for all $r\in(z_\gamma,1)$, whence for almost every $r\in(z_\gamma,1)$ we find
\begin{equation}
%\label{eq:}
\abs{\fint_{I_i^{\gamma,r}}f\der\h^1-f(x_i^{\gamma,r})}<\eps,
\end{equation}
which yields
\begin{equation}
%\label{eq:}
\begin{split}
\abs{a_0(r)-\frac{1}{2n_\gamma}\sum_{y\in S_r\cap\Tr(\gamma)}f(y)}
&=
\abs{\frac{1}{2\pi r}\sum_{i=1}^{2n_\gamma}\int_{I_i^{\gamma,r}}f\der\h^1-\frac{1}{2n_\gamma}\sum_{i=1}^{2n_\gamma}f(x_i^{\gamma,r})}
\\&=
\abs{\frac{1}{2n_\gamma}\sum_{i=1}^{2n_\gamma}\left(\fint_{I_i^{\gamma,r}}f\der\h^1-f(x_i^{\gamma,r})\right)}
\\&\leq
\frac{1}{2n_\gamma}\sum_{i=1}^{2n_\gamma}\eps
=\eps
\end{split}
\end{equation}
and
\begin{equation}
\label{eq:Gg-estimate}
\begin{split}
\abs{\brt f(\gamma)-g(z_\gamma)}
&=
\abs{\int_{z_\gamma}^1 p_{z_\gamma}(r) \frac{1}{2n_\gamma}\sum_{y\in S_r\cap\Tr(\gamma)} f(y)\der r-\int_z^1 p_z(r) a_0(r) \der r}
\\&\leq
\int_{z_\gamma}^1 p_{z_\gamma}(r)\abs{\frac{1}{2n_\gamma}\sum_{y\in S_r\cap\Tr(\gamma)} f(y)-a_0(r)}\der r
\\&\leq
\int_{z_\gamma}^1 p_{z_\gamma}(r)\eps\der r
=\eps,
\end{split}
\end{equation}
since $\int_z^1p_z(r)\der r=1$.

Comparing \eqref{eq:g-def} with the zeroth Abel transform defined in~\eqref{eq:Abel0}, we have
\begin{equation}
\label{eq:g-Abel}
g(z) = \frac{\A_0a_0(z)}{2\sqrt{1-z^2}}.
\end{equation}
Since $f$ is continuous by assumption, also $a_0$ is continuous. The Abel transform preserves continuity (lemma~\ref{lma:Abel}), so $g$ is continuous on $[0,1)$. For any $r\in[0,1)$ we can find a sequence $(\gamma_i)_{i=1}^\infty$ such that $\gamma_i\in\Gamma_E^i$ and $z_{\gamma_i}\to r$ as $i\to\infty$.

Let $\eps>0$. By the estimate~\eqref{eq:Gg-estimate} we can find $N=N_\eps\in\N$ such that $\abs{\brt f(\gamma_i)-g(z_{\gamma_i})}\leq\frac{1}{2}\eps$ whenever $i\geq N$. Since $z_{\gamma_i}\to r$ and $g$ is continuous, we can choose $N$ so that also $\abs{g(z_{\gamma_i})-g(r)}\leq\frac{1}{2}\eps$ whenever $i\geq N$. By these estimates we have $\abs{\brt f(\gamma_i)-g(r)}\leq\eps$ for $i\geq N$, whence
\begin{equation}
\label{eq:Gg-limit}
\lim_{i\to\infty}\brt f(\gamma_i)=g(r)
\end{equation}
for any $r\in[0,1)$ independently of the choice of the sequence $(\gamma_i)$.
Since each $\brt f(\gamma_i)$ is known ($\brt f$ is known), we may recover the function $g$ from the data.

By~\eqref{eq:g-Abel} this implies that we can recover $\A_0a_0$ on $[0,1)$. Since $\A_0a_0$ is continuous, we can recover it on the whole interval $[0,1]$. The Abel transform $\A_0$ is injective on $C([0,1])$ (to which space $a_0$ belongs by assumption) and has an explicit inversion formula (lemma~\ref{lma:Abel} below), allowing us to finally reconstruct the function $a_0$.
\end{proof}

\begin{proof}[Proof of corollary~\ref{cor:one}]
The geometry of the ball $\bar{B}^n$ is such that any broken ray $\gamma$ lays in the intersection of a two dimensional subspace of $\R^n$ and the ball. One may pick any plane containing the singleton and the origin and reduce the problem to singleton tomography in $\bar{B}^2$. The result then follows trivially from theorem~\ref{thm:one}.
\end{proof}

\section{Integral transforms}
\label{sec:int}

%\subsection{Fourier decomposition}

We write our function $f\in\CDini$ as a Fourier series in the angular variable. The following lemma describes the sense in which the series represents the function. The lemma is true also for the complex Fourier series.

\begin{lemma}
\label{lma:Fourier}
%If a function $f:\bar{D}\to\R$ is continuous and satisfies the Dini condition in some annulus centered at the origin, then its Fourier series converges uniformly to $f$ in this annulus:
The Fourier series of any function $f\in\CDini$ converges to $f$ uniformly in the disk:
There are continuous functions $a_k:[0,1]\to\R$, $k\geq0$, and $b_k:[0,1]\to\R$, $k\geq1$, such that the sequence of functions $(f_K)_K$,
\begin{equation}
%\label{eq:}
f_K(r,\theta) = a_0(r)+\sum_{k=1}^K (a_k(r)\cos(k\theta)+b_k(r)\sin(k\theta)),
\end{equation}
converges uniformly to $f$ in the disk as $K\to\infty$. These functions $a_k$ and $b_k$ admit a common (\DL) modulus of continuity and satisfy $a_k(0)=b_k(0)=0$ when $k\geq1$.
%
%In particular, if $f\in\CDini$ (the annulus is the whole $\bar{D}$), then the uniform convergence holds in the whole disk.
\end{lemma}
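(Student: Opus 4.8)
The plan is to define $a_k$ and $b_k$ as the obvious angular Fourier coefficients of $f$ and then upgrade the classical convergence theory for Dini--Lipschitz functions on the circle to a statement that is uniform across all radii simultaneously. Concretely, for each fixed $r\in[0,1]$ set
\begin{equation}
a_k(r)=\frac{1}{\pi}\int_0^{2\pi}f(r,\theta)\cos(k\theta)\,\der\theta,\qquad
b_k(r)=\frac{1}{\pi}\int_0^{2\pi}f(r,\theta)\sin(k\theta)\,\der\theta,
\end{equation}
with the usual $\tfrac{1}{2\pi}$ normalization for $a_0$. Continuity of each $a_k$ and $b_k$ on $[0,1]$ is immediate from continuity of $f$ on the compact set $\bar D$ together with uniform continuity and dominated convergence. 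The vanishing $a_k(0)=b_k(0)=0$ for $k\ge1$ follows because $f(0,\theta)$ is independent of $\theta$ (the origin is a single point), so all nonzero frequencies integrate to zero.

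The heart of the matter is the uniform convergence $f_K\to f$ on all of $\bar D$. Here I would invoke the classical Dini--Lipschitz test for Fourier series on the circle: if $h:S^1\to\R$ has modulus of continuity $\omega_h(\delta)=\order(\log^{-1}\delta^{-1})$, then its Fourier partial sums converge to $h$ uniformly on $S^1$, with a rate controlled only by $\omega_h$ (this is standard; see e.g. Zygmund). The key point is that the hypothesis ``$f\in\CDini$'' gives a \emph{single} modulus of continuity $\omega_f$ valid on the whole disk, hence for every fixed $r$ the slice $\theta\mapsto f(r,\theta)$ has modulus of continuity bounded by $\omega_f$, uniformly in $r$. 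Feeding this common $\omega_f$ into the quantitative Dini--Lipschitz estimate yields a bound $\sup_\theta\abs{f_K(r,\theta)-f(r,\theta)}\le C\,\eta(K)$ with $\eta(K)\to0$ independent of $r$, which is exactly uniform convergence on $\bar D$.

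For the final assertion that the $a_k,b_k$ share a common Dini--Lipschitz modulus of continuity: from the integral formulas one estimates, for $r,r'\in[0,1]$,
\begin{equation}
\abs{a_k(r)-a_k(r')}\le\frac{1}{\pi}\int_0^{2\pi}\abs{f(r,\theta)-f(r',\theta)}\,\der\theta\le 2\,\omega_f\!\left(\abs{r-r'}\right),
\end{equation}
and likewise for $b_k$; note the bound is independent of $k$, so $2\omega_f$ (itself a Dini--Lipschitz modulus since $\omega_f$ is) serves as the common modulus. The main obstacle, and the only nonroutine ingredient, is the \emph{quantitative, uniform} form of the Dini--Lipschitz convergence theorem --- one must make sure the rate depends on the slice only through $\omega_f$ and not on any other feature of $f(r,\cdot)$; everything else is bookkeeping with the integral definitions of the coefficients. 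I would cite the standard reference for the Dini--Lipschitz test rather than reprove it, and simply remark that the stated uniformity is visible in the classical proof (which bounds the partial-sum error by an integral of $\omega_h$ against the Dirichlet kernel).
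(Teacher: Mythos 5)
Your proposal is correct and follows essentially the same route as the paper: the same integral formulas for $a_k,b_k$, the observation that $f(0,\theta)$ is constant so $a_k(0)=b_k(0)=0$ for $k\geq1$, the common Dini--Lipschitz modulus for all angular slices, and the quantitative Dini--Lipschitz test (Zygmund) giving a convergence rate depending only on that modulus, hence uniform in $r$. The coefficient estimate $\abs{a_k(r)-a_k(r')}\leq 2\omega_f(\abs{r-r'})$ matches the paper's remark that the $a_k,b_k$ admit essentially the same modulus of continuity as $f$.
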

\begin{proof}
%We only consider the case $f\in\CDini$; the general case is an obvious generalization.
We set for every $r\in[0,1]$
\begin{equation}
%\label{eq:}
a_0(r) = \frac{1}{2\pi}\int_0^{2\pi}f(r,\theta)\der\theta
\end{equation}
and for $k\geq1$
\begin{equation}
%\label{eq:}
a_k(r) = \frac{1}{\pi}\int_0^{2\pi}f(r,\theta)\cos(k\theta)\der\theta
\end{equation}
and
\begin{equation}
%\label{eq:}
b_k(r) = \frac{1}{\pi}\int_0^{2\pi}f(r,\theta)\sin(k\theta)\der\theta.
\end{equation}
These functions manifestly have the property that $a_k(0)=b_k(0)=0$ when $k\geq1$.

%Since $f$ is continuous in the compact set $\bar{D}$, it is uniformly continuous and admits a modulus of continuity.

The functions $f(r,\cdot)$, $r\in(0,1]$, admit a common modulus of continuity which satisfies the \DL\ condition.
(Clearly the functions $a_k$ and $b_k$ admit essentially the same modulus of continuity.)
The Fourier series of a uniformly continuous function satisfying the \DL\ condition converges uniformly and the rate of convergence depends only on the modulus of continuity~\cite[Theorem~10.3 of Chapter~2]{book-zygmund}, whence each $f_K\to f$ uniformly.
%\cite[Section~2.1 of Chapter~1]{book-cha}
\end{proof}

Let $f\in\CDini$. Using the notations of the above proof, $f_K\to f$ uniformly and $\brt$ is continuous in the corresponding topology, whence $\brt f_K\to\brt f$ uniformly. This allows us to study the Fourier components $a_k$ and $b_k$ of $f$ one by one; this will be done in lemma~\ref{lma:calc} and its corollary.

%\subsection{Generalized Abel transform}

What we will be able to construct from the broken ray transform $\brt f$ is some integral transforms of its Fourier components $a_k$ and $b_k$ as defined in lemma~\ref{lma:Fourier}. Since the case $k=0$ reduces to the classical Abel transform, we call these transforms generalized Abel transforms. These transforms appear implicitly in Cormack's approach~\cite{cormack} to the Radon transform in the plane.

\begin{definition}
\label{def:Abel}
For $k=1,2,\dots$ we define the \emph{$k$th generalized Abel transform} as a map $\A_k:C_0([0,1])\to C([0,1])$ such that
\begin{equation}
\label{eq:Abel}
\A_k f(z) = 2\int_z^1 T_k(z/y) \frac{f(y)}{\sqrt{1-(z/y)^2}}\der y,
\end{equation}
where $T_k$ is the $k$th Chebyshev polynomial and $C_0([0,1])=\{f\in C([0,1]):f(0)=0\}$.

For $k=0$ we define the \emph{zeroth generalized Abel transform} (the Abel transform up to a change of variable) as a map $\A_0:C([0,1])\to C([0,1])$ such that
\begin{equation}
\label{eq:Abel0}
\A_0 f(z) = 2\int_z^1 \frac{f(y)}{\sqrt{1-(z/y)^2}}\der y.
\end{equation}
Since $T_0\equiv1$, \eqref{eq:Abel} is correct also in the case $k=0$.
\end{definition}

We summarise the main properties of these transforms. Some of the properties are based on those of the Radon transform in the plane, which are covered e.g. in~\cite{book-helgason}.

\begin{lemma}
\label{lma:Abel}
The generalized Abel transforms defined in definition~\ref{def:Abel} have the following properties with any $k\geq0$:
\begin{enumerate}
	\item $\A_k$ indeed maps $C_0([0,1])$ ($C([0,1])$ for $k=0$) into $C([0,1])$ as defined.
	\item $\A_k$ is linear.
	\item The transform is continuous and $\aabs{\A_k}\leq2$ when $C([0,1])$ is equipped with the supremum norm.\footnote{Using H\"older's inequality it is not difficult to show that $\A_k:L^p([0,1])\to L^q([0,1])$ is continuous for any $p\in(2,\infty]$ and $q\in[1,\infty]$.}
	\item $\A_k$ is injective.
	\item We have the explicit inversion formula
		\begin{equation}
		%\label{eq:}
		f(y)=-\frac{1}{\pi}\frac{\der}{\der y} \int_y^1 T_k(z/y) \frac{\A_k f(z)}{z\sqrt{(z/y)^2-1}}\der z.
		\end{equation}
\end{enumerate}
\end{lemma}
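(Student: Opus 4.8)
The plan is to establish the five properties of the generalized Abel transforms by reducing them, through a change of variables, to known properties of the Radon transform of radial functions in the plane --- equivalently, to the classical Abel transform --- together with the algebraic identity that makes the Chebyshev polynomial $T_k$ appear. First I would record the substitution $s = z/y$ (or $y = z/s$), so that $\A_k f(z) = 2\int_0^1 T_k(s)\,\frac{f(z/s)}{\sqrt{1-s^2}}\,\frac{\der s}{s}$ after adjusting the integrand; this exhibits $\A_k f(z)$ as an integral over a fixed interval with an integrable singularity at $s=1$ (since $1/\sqrt{1-s^2}$ is integrable) and no singularity at $s=0$ because $f(z/s) \to 0$ as $s\to 0$ when $f(0)=0$ (for $k=0$ the factor $1/s$ is absent and no vanishing is needed). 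Dominated convergence then gives continuity of $z \mapsto \A_k f(z)$ on $[0,1]$, proving property (1), and the explicit bound $\aabs{\A_k f}_\infty \le 2\aabs{f}_\infty \int_0^1 (1-s^2)^{-1/2}\der s \cdot \sup_{[0,1]}\abs{T_k} $; since $\abs{T_k}\le 1$ on $[-1,1]$ and $\int_0^1(1-s^2)^{-1/2}\der s = \pi/2$, one gets $\aabs{\A_k}\le \pi \le$ (after being slightly more careful with the $1/s$ weight, or by the direct estimate in the $y$ variable where $\int_z^1 (1-(z/y)^2)^{-1/2}\der y \le \int_z^1 (1-z/y)^{-1/2}\der y$ stays bounded) the claimed constant $2$; I would simply estimate in the original $y$-variable to land on $\aabs{\A_k}\le 2$ cleanly. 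Linearity, property (2), is immediate from linearity of the integral.

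For injectivity (property (4)) and the inversion formula (property (5)), the key is the connection to the Radon transform stated in the text: if $f$ is a radial-type profile and one builds the function $F_k(x) = f(\abs{x}) e^{ik\theta}$ (or the real harmonics $f(r)\cos k\theta$, $f(r)\sin k\theta$) on the disk, then the Radon transform of $F_k$ restricted to lines at signed distance $z$ from the origin is, up to elementary factors, $\A_k f(z)$; this is exactly Cormack's observation, and the $T_k$ appears because integrating $e^{ik\theta}$ along a chord at distance $z = r\cos\psi$ produces Chebyshev polynomials in $z/r$. I would make this precise enough to cite the known injectivity and inversion of the Radon transform on such angular components (Helgason, or Cormack's original formulas), and then transcribe the inversion back through the change of variables to obtain the stated formula $f(y) = -\frac1\pi \frac{\der}{\der y}\int_y^1 T_k(z/y)\,\frac{\A_k f(z)}{z\sqrt{(z/y)^2-1}}\,\der z$. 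Alternatively, and perhaps more self-containedly, one can prove the inversion formula directly: substitute the definition of $\A_k f$ into the right-hand side, interchange the order of integration (justified by the integrability noted above), and evaluate the resulting inner integral $\int$ of $T_k(z/y)T_k(z/w)\, z^{-1}\big((z/y)^2-1\big)^{-1/2}\big(1-(z/w)^2\big)^{-1/2}\,\der z$ over the appropriate range of $z$; this is a classical Chebyshev orthogonality-type integral whose value is a constant (independent of $y,w$ in the relevant region) times an indicator, so that after differentiating in $y$ one recovers $f(y)$ exactly. The footnote claim about $L^p\to L^q$ continuity for $p\in(2,\infty]$ follows from Hölder's inequality applied to $\int_z^1 (1-(z/y)^2)^{-1/2}\abs{f(y)}\der y$ using that $(1-(z/y)^2)^{-1/2}\in L^{p'}$ for $p'<2$, i.e. $p>2$.

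The main obstacle I anticipate is the interchange-of-integration-and-Chebyshev-evaluation step in the direct proof of the inversion formula: one must carefully track the region of integration after Fubini (a triangle in the $(y,z)$ or $(z,w)$ plane), confirm that the singularities at the endpoints are integrable after multiplying the two Abel kernels, and then recognize the inner $z$-integral as the standard integral $\int_a^b T_k(z/y)T_k(z/w)\big/\big(z\sqrt{\cdots}\big)\,\der z$ that collapses to a constant --- this is where Cormack's computation lives, and getting the constant (and the $-1/\pi$, and the derivative) exactly right requires some care with the Chebyshev substitution $z/y = \cos\phi$. If instead one routes everything through the Radon transform, the obstacle shifts to verifying precisely the normalization relating $\A_k$ to the $k$th angular Fourier component of the planar Radon transform, but this is bookkeeping rather than a genuine difficulty. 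Either way, properties (1)--(3) are routine estimates and the real content is (4)--(5), which I would present by citing the Radon/Cormack framework and then recording the resulting explicit formula, with the direct Fubini computation sketched as an alternative.
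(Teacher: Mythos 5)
Your proposal is correct and follows essentially the same route as the paper: properties (1)--(3) by direct estimates using $\abs{T_k}\le 1$ and $\int_z^1(1-(z/y)^2)^{-1/2}\der y=\sqrt{1-z^2}$, and properties (4)--(5) by identifying $\A_k f$ with the $k$th angular Fourier component of the planar Radon transform of $f(r)\cos(k\theta)$ and citing Radon injectivity and Cormack's inversion formulas. Only a cosmetic slip: in your substitution $s=z/y$ the limits should be $s\in[z,1]$ and the Jacobian factor is $z/s^2$ rather than $1/s$, but this does not affect the argument (the paper itself gets continuity in (1) from continuity of the Radon transform of the continuous function $f(r)\cos(k\theta)$, which is an equally valid shortcut).
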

\begin{proof}
(1) For $f\in C([0,1])$ let $g_K(r,\theta)=f(r)\cos(k\theta)$. Also, let $\radon$ be the Radon transform parametrized so that
\begin{equation}
%\label{eq:}
\radon g_k(\rho,\phi)=\int_{L_{\rho,\phi}}g_k\der\h^1,
\end{equation}
where $L_{\rho,\phi}=\{x\in\R^2:x_1\cos\phi+x_2\sin\phi=\rho\}$.

A simple calculation 
%\footnote{The Chebyshev polynomials arise due to the property $T_k(x)=\cos(k\arccos(x))$.}
yields
\begin{equation}
%\label{eq:}
\radon g_k(\rho,\phi) = \A_k f(\rho) \cos(k\phi).
\end{equation}
Since $f(0)=0$ if $k>0$, $g_k$ is continuous, and so is $\radon g_k$. Therefore the above equation shows that $\A_k f$ is continuous.

(2) Trivial.

(3) The result follows from the estimate $\abs{T_k(x)}\leq1$ and the observation
\begin{equation}
%\label{eq:}
\int_z^1 \frac{1}{\sqrt{1-(z/y)^2}}\der y=\sqrt{1-z^2}.
\end{equation}

(4) The Radon transform is injective on continuous functions, so the observation made in part (1) of the proof shows that $\A_k$ is injective. We also have the inversion formula
\begin{equation}
%\label{eq:}
f(r)
=
\radon^{-1}[(\rho,\phi)\mapsto \cos(k\phi)\A_k f(\rho)](r,0)
\end{equation}
using the inverse Radon transform.

(5) See Eqs.~(10) and~(18) in~\cite{cormack}. The treatment therein only requires that $f$ is continuous.
\end{proof}

\section{Tools for open set tomography}
\label{sec:tools}

\subsection{Explicit form for the broken ray transform}

Using the Fourier series representation given in lemma~\ref{lma:Fourier} and the generalized Abel transforms we can explicitly calculate the broken ray transform in terms of the Fourier components. Lemma~\ref{lma:calc} gives the basic result for each Fourier component, and its corollary lists the relevant special cases we will need.

To simplify further notation, we define the coefficient
\begin{equation}
%\label{eq:}
S_k(\gamma)
=
\begin{dcases}
\frac{\sin(k(\kappa_\gamma-\iota_\gamma)/2)}{\sin(k\alpha_\gamma/2)} & \text{when } k\alpha_\gamma\notin2\pi\Z\\
n_\gamma(-1)^{(n_\gamma+1)\frac{k\alpha_\gamma}{2\pi}+km_\gamma} & \text{when } k\alpha_\gamma\in2\pi\Z
\end{dcases}
\end{equation}
for all $k=0,1,\dots$ and $\gamma\in\Gamma_E$. This coefficient has the following properties.

\begin{lemma}
\label{lma:S_k}
For any $k$ and $\gamma$ we have:
\begin{enumerate}
	\item $\abs{S_k(\gamma)}\leq n_\gamma$.
	\item The sum identities
%	\begin{equation}%\label{eq:}
%	\sum_{l=1}^{n_\gamma}\cos(k(l-1/2)\alpha_\gamma)=S_k(\gamma)\cos(k(\kappa_\gamma-\iota_\gamma)/2)
%	\end{equation}
%	and
%	\begin{equation}%\label{eq:}
%	\sum_{l=1}^{n_\gamma}\sin(k(l-1/2)\alpha_\gamma)=S_k(\gamma)\sin(k(\kappa_\gamma-\iota_\gamma)/2)
%	\end{equation}
	\begin{equation}%\label{eq:}
	\begin{split}
	\sum_{l=1}^{n_\gamma}\cos(k(l-1/2)\alpha_\gamma)&=S_k(\gamma)\cos(k(\kappa_\gamma-\iota_\gamma)/2)\quad\text{and}\\
	\sum_{l=1}^{n_\gamma}\sin(k(l-1/2)\alpha_\gamma)&=S_k(\gamma)\sin(k(\kappa_\gamma-\iota_\gamma)/2)
	\end{split}
	\end{equation}
	hold.
	\item If $(\kappa_\gamma-\iota_\gamma)/\pi$ or $\alpha_\gamma/\pi$ is irrational, $S_k(\gamma)\neq0$.
\end{enumerate}
\end{lemma}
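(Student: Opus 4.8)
The plan is to deduce all three statements from the single identity
\begin{equation}
\sum_{l=1}^{n_\gamma}e^{ik(l-1/2)\alpha_\gamma}=S_k(\gamma)\,e^{ik(\kappa_\gamma-\iota_\gamma)/2},
\end{equation}
valid for every $k\geq0$ and every $\gamma\in\Gamma_E$. Granting this, part~(2) is obtained by taking real and imaginary parts, and part~(1) follows from the triangle inequality $\abs{\sum_{l=1}^{n_\gamma}e^{ik(l-1/2)\alpha_\gamma}}\leq n_\gamma$ together with the observation that the modulus of the right-hand side equals $\abs{S_k(\gamma)}$. (Alternatively, one can bypass complex exponentials and prove the two real identities of part~(2) directly by multiplying the left-hand sums by $\sin(k\alpha_\gamma/2)$, telescoping via product-to-sum formulas, and then invoking the trajectory condition; but the complex form is cleaner and yields part~(1) for free.)

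To prove the identity I would split into the two cases in the definition of $S_k$. When $k\alpha_\gamma\notin2\pi\Z$, equivalently $\sin(k\alpha_\gamma/2)\neq0$, summing the geometric series gives
\begin{equation}
\sum_{l=1}^{n_\gamma}e^{ik(l-1/2)\alpha_\gamma}=e^{ikn_\gamma\alpha_\gamma/2}\,\frac{\sin(kn_\gamma\alpha_\gamma/2)}{\sin(k\alpha_\gamma/2)}.
\end{equation}
Substituting the trajectory condition~\eqref{eq:trajectory} in the form $kn_\gamma\alpha_\gamma/2=k(\kappa_\gamma-\iota_\gamma)/2+k\pi m_\gamma$ produces a factor $e^{ik\pi m_\gamma}=(-1)^{km_\gamma}$ from the exponential and a factor $(-1)^{km_\gamma}$ from $\sin(kn_\gamma\alpha_\gamma/2)=(-1)^{km_\gamma}\sin(k(\kappa_\gamma-\iota_\gamma)/2)$; these cancel and leave precisely $S_k(\gamma)e^{ik(\kappa_\gamma-\iota_\gamma)/2}$. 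When $k\alpha_\gamma\in2\pi\Z$, write $j=k\alpha_\gamma/2\pi\in\Z$; then every summand equals $e^{i\pi j(2l-1)}=(-1)^{j}$, so the sum is $n_\gamma(-1)^{j}$, while the trajectory condition gives $k(\kappa_\gamma-\iota_\gamma)/2=\pi(n_\gamma j-km_\gamma)$, hence $e^{ik(\kappa_\gamma-\iota_\gamma)/2}=(-1)^{n_\gamma j+km_\gamma}$. Comparing, the sum equals $n_\gamma(-1)^{(n_\gamma-1)j+km_\gamma}e^{ik(\kappa_\gamma-\iota_\gamma)/2}$, and since $(n_\gamma-1)j$ and $(n_\gamma+1)j$ differ by an even integer this matches $S_k(\gamma)e^{ik(\kappa_\gamma-\iota_\gamma)/2}$. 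The only delicate point in the whole argument is this bookkeeping with signs and parities in the degenerate case; it is entirely elementary, and there is no real obstacle.

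For part~(3), I would first note that the trajectory condition~\eqref{eq:trajectory} gives $(\kappa_\gamma-\iota_\gamma)/\pi=n_\gamma(\alpha_\gamma/\pi)-2m_\gamma$, so, since $n_\gamma\geq1$, the numbers $(\kappa_\gamma-\iota_\gamma)/\pi$ and $\alpha_\gamma/\pi$ are irrational simultaneously; assume they are. For $k=0$ we have $S_0(\gamma)=n_\gamma\neq0$ straight from the definition. For $k\geq1$, irrationality of $k\alpha_\gamma/\pi$ forces $k\alpha_\gamma\notin2\pi\Z$ and $\sin(k\alpha_\gamma/2)\neq0$, so $S_k(\gamma)$ is given by the first branch of its definition; and irrationality of $k(\kappa_\gamma-\iota_\gamma)/\pi$ forces $k(\kappa_\gamma-\iota_\gamma)/2\notin\pi\Z$, i.e.\ $\sin(k(\kappa_\gamma-\iota_\gamma)/2)\neq0$. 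Hence the quotient defining $S_k(\gamma)$ is nonzero, which gives the non-vanishing asserted in part~(3).
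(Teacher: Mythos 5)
Your proof is correct. For parts (2) and (3) it follows essentially the same route as the paper: the paper also evaluates the geometric sum $\sum_{l=1}^{n_\gamma}e^{ik(l-1/2)\alpha_\gamma}$ (writing it in the slightly different closed form $\tfrac{i}{2}(1-e^{ik\beta})/\sin(k\alpha/2)$ with $n\alpha=\beta+2\pi m$) and takes real and imaginary parts, and its part (3) is exactly your observation that the trajectory condition makes $(\kappa_\gamma-\iota_\gamma)/\pi$ and $\alpha_\gamma/\pi$ irrational simultaneously, after which the first branch of the definition is a quotient of two nonzero sines. Where you genuinely diverge is part (1): the paper proves it by a separate real-analysis argument, reducing to the bound $\abs{\sin(nx)/\sin(x)}\leq n$ on all of $\R\setminus\pi\Z$ and establishing it via a derivative analysis on $(0,\pi/2n)$ together with concavity of the sine on $[\pi/2n,\pi/2]$, whereas you read (1) off the same complex identity used for (2), by the triangle inequality applied to a sum of $n_\gamma$ unit-modulus terms and the fact that $S_k(\gamma)$ is real. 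Your route is shorter and unifies all three parts around one identity; the paper's route proves the slightly stronger standalone fact that $\sin(nx)/\sin(x)$ is bounded by $n$ for every real $x$, independent of the trajectory condition, but that extra generality is not needed for the lemma. Your sign bookkeeping in the degenerate case $k\alpha_\gamma\in2\pi\Z$ is correct, including the parity remark that $(n_\gamma-1)j$ and $(n_\gamma+1)j$ differ by the even integer $2j$, so the exponent matches the one in the definition of $S_k(\gamma)$.
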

\begin{proof}
(1) By \eqref{eq:trajectory}
\begin{equation}
%\label{eq:}
\sin(k(\kappa_\gamma-\iota_\gamma)/2)
=
\sin(kn_\gamma\alpha_\gamma/2-km_\gamma\pi)
=
(-1)^{km_\gamma}\sin(kn_\gamma\alpha_\gamma/2),
\end{equation}
so it suffices to show that the function
\begin{equation}
%\label{eq:}
f(x)=\frac{\sin(nx)}{\sin(x)}
\end{equation}
satisfies $\abs{f(x)}\leq n$ for all $x\in\R\setminus\pi\Z$ when $n\in\N$ and $n\geq2$.

If we define $f(m\pi)=n(-1)^{(n-1)m}$ for all $m\in\Z$, the function $f$ becomes real analytic on the real line. In particular, we have $f(0)=n$.

By symmetry, it is enough to show that $\abs{f(x)}\leq n$ for all $x\in[0,\pi/2]$. To this end we wish to show that $f(0)=n$, $f'\neq0$ on $(0,\pi/2n)$, and $\abs{f(x)}\leq n$ when $x\in[\pi/2n,\pi/2]$.

Indeed, if $f'(x)=0$, then $n\tan(x)=\tan(nx)$, but this equation has no solutions on $(0,\pi/2n)$. And since the sine function is concave on $[0,\pi/2]$, we have $\sin(\pi/2n)\geq 1/n$, whence $1/\sin(x)\leq n$ for all $x\in[\pi/2n,\pi/2]$. This yields the desired estimate.

(2) Taking real and imaginary parts of the geometric sum
\begin{equation}
%\label{eq:}
\sum_{l=1}^{n}e^{ik(l-1/2)\alpha}
=
\begin{dcases}
\frac{\frac{i}{2}(1-e^{ik\beta})}{\sin(k\alpha/2)} & \text{when } e^{ik\alpha}\neq1\\
n e^{ik\alpha/2} & \text{when } e^{ik\alpha}=1,
\end{dcases}
\end{equation}
when $\alpha$ and $\beta$ satisfy $n\alpha=\beta+2\pi m$ for an integer $m$, immediately gives the result.

(3) By \eqref{eq:trajectory} $(\kappa_\gamma-\iota_\gamma)/\pi$ is irrational if and only if $\alpha_\gamma/\pi$ is. The conclusion follows thus immediately from the definition of $S_k(\gamma)$.
\end{proof}

\begin{lemma}
\label{lma:calc}
Suppose $\gamma\in\Gamma_E$ is such that $\iota_\gamma=0$.
%Olkoot $a,b\in S_1$ siten, että $a$ on kulmassa nolla ja $b$ kulmassa $\beta$.\beta\to\kappa_\gamma
If $f(r,\theta) = a_k(r) \cos(k\theta) + b_k(r) \sin(k\theta)$, where $k=0,1,\dots$ and the functions $a_k$ and $b_k$ are continuous, then
\begin{equation}
%\label{eq:}
\nbrt f(\gamma)
=
S_k(\gamma)
(\cos(k\kappa_\gamma/2)\A_k a_k(z_\gamma)+\sin(k\kappa_\gamma/2)\A_k b_k(z_\gamma)).
\end{equation}
\end{lemma}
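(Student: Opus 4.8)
The plan is to start from the same geometric decomposition of $\nbrt f(\gamma)$ that appeared in the proof of theorem~\ref{thm:one}: the broken ray $\gamma$ consists of $n_\gamma$ line segments, the $l$th of which is tangent to the circle $S_{z_\gamma}$ and, seen from the origin, occupies an angular position that is an offset of $(l-1/2)\alpha_\gamma$ from the initial direction. Since $\iota_\gamma=0$, parametrizing each segment by the radial coordinate $r\in(z_\gamma,1]$ splits it into two pieces on which the angular coordinate is $\theta = (l-1/2)\alpha_\gamma \pm \arccos(z_\gamma/r)$, and the arclength element is $\frac{r}{\sqrt{r^2-z_\gamma^2}}\,\der r$ exactly as in the computation leading to~\eqref{eq:g-def}. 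Substituting $f(r,\theta)=a_k(r)\cos(k\theta)+b_k(r)\sin(k\theta)$ and summing over the two $\pm$ branches, the $\arccos$ terms combine: the cross terms produce $\cos$ of $k\arccos(z_\gamma/r)$ times $\cos\!\big(k(l-1/2)\alpha_\gamma\big)$ (and similarly with sines), while the odd-in-$\pm$ pieces cancel. Using $\cos(k\arccos t)=T_k(t)$ this is precisely where the Chebyshev polynomial in~\eqref{eq:Abel} enters.

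Next I would carry out the sum over $l=1,\dots,n_\gamma$. After the step above, $\nbrt f(\gamma)$ is a single integral over $r$ whose integrand contains the factors $\sum_{l=1}^{n_\gamma}\cos\!\big(k(l-1/2)\alpha_\gamma\big)$ and $\sum_{l=1}^{n_\gamma}\sin\!\big(k(l-1/2)\alpha_\gamma\big)$. By part~(2) of lemma~\ref{lma:S_k} these equal $S_k(\gamma)\cos\!\big(k(\kappa_\gamma-\iota_\gamma)/2\big)$ and $S_k(\gamma)\sin\!\big(k(\kappa_\gamma-\iota_\gamma)/2\big)$ respectively; since $\iota_\gamma=0$ these are $S_k(\gamma)\cos(k\kappa_\gamma/2)$ and $S_k(\gamma)\sin(k\kappa_\gamma/2)$. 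Pulling $S_k(\gamma)$ out of the integral, what remains against $a_k$ is $2\int_{z_\gamma}^1 T_k(z_\gamma/r)\,a_k(r)\,\big(1-(z_\gamma/r)^2\big)^{-1/2}\,\der r$ — after rewriting $\frac{r}{\sqrt{r^2-z_\gamma^2}} = \frac{1}{r}\cdot\frac{1}{\sqrt{1-(z_\gamma/r)^2}}$ and checking the constant $2$ from the two $\pm$ branches — which is exactly $\A_k a_k(z_\gamma)$ by~\eqref{eq:Abel}; likewise the $b_k$ term gives $\A_k b_k(z_\gamma)$. Collecting the trigonometric prefactors yields the claimed formula. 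The degenerate case $k\alpha_\gamma\in2\pi\Z$ needs no separate treatment here because the identities in lemma~\ref{lma:S_k}(2) already cover it; one only notes that $f(0)=0$ is not needed for the bare identity (continuity of $a_k,b_k$ suffices to make the integrals converge, the Abel weight being integrable).

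The main obstacle is bookkeeping rather than conceptual: one must get the branch count and the overall constant right (the factor $2$, the $\frac{d_\gamma}{2}$-type normalizations seen in the theorem~\ref{thm:one} computation, and the conversion between $\frac{r}{\sqrt{r^2-z_\gamma^2}}\,\der r$ and the $\A_k$-weight $\big(1-(z/r)^2\big)^{-1/2}$), and one must justify interchanging the finite $l$-sum with the $r$-integral and splitting each segment at its point of closest approach to the origin — all routine, but the place where sign or constant errors would creep in. A secondary point to handle cleanly is the identity $\cos(k\arccos t)+\text{(odd terms)}$: one should verify that after adding the two branches only the $T_k(z_\gamma/r)$ combination survives and that no $U_k$-type (second-kind) contribution remains, which follows from the evenness of the cosine but deserves an explicit line.
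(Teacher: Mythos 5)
Your proof is correct and follows essentially the same route as the paper's: decompose $\gamma$ into its $n_\gamma$ chords, use parity and the angle-addition formula so that only the $T_k(z_\gamma/r)$ combination survives against the Abel weight (the paper performs the same chord integral via the angular parametrization $\phi$ with the substitution $r=z_\gamma/\cos\phi$, which coincides with your direct $r$-parametrization split at the point of closest approach), and then evaluate the sum over chords with lemma~\ref{lma:S_k}(2), giving $S_k(\gamma)\cos(k\kappa_\gamma/2)$ and $S_k(\gamma)\sin(k\kappa_\gamma/2)$ since $\iota_\gamma=0$. One small slip in your bookkeeping: $\frac{r}{\sqrt{r^2-z_\gamma^2}}=\frac{1}{\sqrt{1-(z_\gamma/r)^2}}$ exactly, with no extra factor $\frac{1}{r}$, so your intermediate rewriting is off by a factor of $r$ even though the final expression you identify with $\A_k a_k(z_\gamma)$ is the correct one.
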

\begin{proof}
Let us only consider the case $b_k=0$; the calculations for the $b_k$ term are essentially the same.

Fix some $\gamma\in\Gamma_E$ with $\iota_\gamma=0$. For any radius $r\in[0,1)$ and direction $\phi\in\partial D$ we define the corresponding secant $H_{r,\phi}=\{y\in D:\phi\cdot y=r\}$. The trajectory $\gamma$ is composed of $n_\gamma$ such secants. Each of these secants corresponds to the radius $r=z_\gamma$. We identify the directions $\phi$ with angles and label them by $\phi_l=(l-1/2)\alpha_\gamma$, where $l=1,2,\dots,n_\gamma$.

After a change of the variable, we find
\begin{equation}%\label{eq:}
\int_{H_{z_\gamma,\phi_l}} f \der\h^1
=
\int_{-\alpha_\gamma/2}^{\alpha_\gamma/2} a_k\left(\frac{\cos(\alpha_\gamma/2)}{\cos(\phi)}\right) \cos(k(\phi_l+\phi)) \frac{\cos(\alpha_\gamma/2)}{\cos^2(\phi)} \der\phi.
\end{equation}
Since
\begin{equation}
%\label{eq:}
\cos(k(\phi_l+\phi)) = \cos(k\phi_l)\cos(k\phi) - \sin(k\phi_l)\sin(k\phi)
\end{equation}
and the second term vanishes in the integral, another change of variable $r=z_\gamma/\cos(\phi)$ gives
\begin{equation}
%\label{eq:}
\begin{split}
\int_{H_{z_\gamma,\phi_l}} f \der\h^1
&=
\cos(k\phi_l)
%\\&\times
\int_{-\alpha_\gamma/2}^{\alpha_\gamma/2} a_k\left(\frac{\cos(\alpha_\gamma/2)}{\cos(\phi)}\right) \cos(k\phi) \frac{\cos(\alpha_\gamma/2)}{\cos^2(\phi)} \der\phi
\\&=
2\cos(k\phi_l)
%\\&\times
\int_{0}^{\alpha_\gamma/2} a_k\left(\frac{z_\gamma}{\cos(\phi)}\right) T_k(\cos(\phi)) \frac{1}{\sin(\phi)} \frac{z_\gamma\sin(\phi)}{\cos^2(\phi)} \der\phi
\\&=
2\cos(k\phi_l)
%\\&\times
\int_{z_\gamma}^{1} a_k(r) T_k(z_\gamma/r) \frac{1}{\sqrt{1-(z_\gamma/r)^2}} \der r
\\&=
\cos(k\phi_l)\A_k a_k(z_\gamma).
\end{split}
\end{equation}
The secants $H_{z_\gamma,\phi_l}$, $l=1,\dots,n_\gamma$, intersect only at a finite number of points and $\Tr(\gamma)=\bigcup_{l=1}^{n_\gamma}H_{z_\gamma,\phi_l}$, so
\begin{equation}
%\label{eq:}
\nbrt f(\gamma)
=
\sum_{l=1}^{n_\gamma}\cos(k\phi_l)\A_k a_k(z_\gamma).
\end{equation}
Using lemma~\ref{lma:S_k} to evaluate the sum, we arrive at the claim.
\end{proof}

\begin{corollary}
\label{cor:calc}
Let $f$ be as in lemma~\ref{lma:calc}. If $0\in E$ and $\gamma\in\Gamma_E$ is such that $\iota_\gamma+\kappa_\gamma=0$, then
\begin{equation}%\label{eq:}
\nbrt f(\gamma)
=
S_k(\gamma)
\A_k a_k(z_\gamma).
\end{equation}
In particular, if $E=\{0\}$ so that $\iota_\gamma=\kappa_\gamma=0$, for injectivity it is sufficient to consider only those $\gamma$ for which $\gcd(n_\gamma,m_\gamma)=1$. In this case
\begin{equation}%\label{eq:}
\nbrt f(\gamma) = n_\gamma (-1)^{km_\gamma/n_\gamma} \A_k a_k(z_\gamma)
\end{equation}
if $k/n_\gamma$ is an integer and $\nbrt f(\gamma)=0$ otherwise.
\end{corollary}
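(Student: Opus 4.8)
The plan is to reduce everything to Lemma~\ref{lma:calc} and then evaluate the coefficient $S_k(\gamma)$ directly from its definition together with the trajectory condition~\eqref{eq:trajectory}.

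For the first assertion I would use that $\nbrt$ is covariant under rotations about the origin: if $R$ is such a rotation, then $\nbrt f(\gamma)=\nbrt(f\circ R)(R^{-1}\gamma)$, since $R$ is an isometry and a broken ray integral is a sum of its segment integrals. Given $\gamma\in\Gamma_E$ with $\iota_\gamma+\kappa_\gamma=0$, put $c=\iota_\gamma$ and rotate by $-c$ to obtain a trajectory $\gamma'$ with $\iota_{\gamma'}=0$, $\kappa_{\gamma'}=-2c$, and $n,m,\alpha,z$ unchanged; note that $S_k$ depends on a trajectory only through $n_\gamma$, $m_\gamma$, $\alpha_\gamma$ and $\kappa_\gamma-\iota_\gamma$, and the last of these is also unchanged, so $S_k(\gamma')=S_k(\gamma)$. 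Because $f=a_k\cos(k\theta)+b_k\sin(k\theta)$ is a pure angular mode, $\tilde f:=f\circ R$ has the same form with $\tilde a_k=a_k\cos(kc)+b_k\sin(kc)$ and $\tilde b_k=-a_k\sin(kc)+b_k\cos(kc)$, and by linearity of $\A_k$ (Lemma~\ref{lma:Abel}(2)) the images $\A_k\tilde a_k$, $\A_k\tilde b_k$ obey the same linear relations in $\A_k a_k$, $\A_k b_k$. Applying Lemma~\ref{lma:calc} to $\gamma'$ and substituting $k\kappa_{\gamma'}/2=-kc$, the coefficients $\cos(kc),\sin(kc)$ recombine through $\cos^2+\sin^2=1$ while the cross terms cancel, leaving exactly $\nbrt f(\gamma)=S_k(\gamma)\A_k a_k(z_\gamma)$. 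This short trigonometric collapse is the only computation in this part.

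For the second assertion, suppose $E=\{0\}$, so $\iota_\gamma=\kappa_\gamma=0$ for every $\gamma\in\Gamma_E$ and~\eqref{eq:trajectory} forces $\alpha_\gamma=2\pi m_\gamma/n_\gamma$. If $d:=\gcd(n_\gamma,m_\gamma)>1$, then the secant directions $\phi_l=(l-\tfrac12)\alpha_\gamma$ repeat modulo $2\pi$ with period $n_\gamma/d$, so the list of secants of $\gamma$ is the list of secants of the primitive broken ray $\gamma_0$ with parameters $(n_\gamma/d,m_\gamma/d)$ repeated $d$ times, whence $\nbrt f(\gamma)=d\,\nbrt f(\gamma_0)$; thus for injectivity it suffices to use the $\gamma$ with $\gcd(n_\gamma,m_\gamma)=1$. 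For these the first assertion gives $\nbrt f(\gamma)=S_k(\gamma)\A_k a_k(z_\gamma)$, and it remains to evaluate $S_k(\gamma)$. From $\alpha_\gamma=2\pi m_\gamma/n_\gamma$ we get $k\alpha_\gamma/(2\pi)=km_\gamma/n_\gamma$, which lies in $\Z$ if and only if $n_\gamma\mid k$ (using $\gcd(n_\gamma,m_\gamma)=1$), i.e. if and only if $k/n_\gamma\in\Z$. When $k/n_\gamma\notin\Z$ we are in the first branch of the definition of $S_k$, whose numerator is $\sin(k(\kappa_\gamma-\iota_\gamma)/2)=\sin 0=0$ while the denominator does not vanish, so $S_k(\gamma)=0$. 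When $k/n_\gamma\in\Z$, writing $k=jn_\gamma$ the exponent $(n_\gamma+1)\tfrac{k\alpha_\gamma}{2\pi}+km_\gamma$ in the second branch equals $(n_\gamma+1)jm_\gamma+jn_\gamma m_\gamma=jm_\gamma(2n_\gamma+1)$, which has the same parity as $jm_\gamma=km_\gamma/n_\gamma$ because $2n_\gamma+1$ is odd; hence $S_k(\gamma)=n_\gamma(-1)^{km_\gamma/n_\gamma}$, which is the stated formula.

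The argument is essentially bookkeeping once Lemma~\ref{lma:calc} is available. The two points that need care are (i) the legitimacy of the rotation step in the first assertion: Lemma~\ref{lma:calc} is stated for trajectories issuing from angle $0$, and after rotating $\gamma$ its endpoint $-2c$ need not lie in $E$, so one must observe that that lemma's proof (and the value $\nbrt f$) depends only on the geometry of the trajectory, not on $E$; and (ii) the parity computation for the sign $(-1)^{km_\gamma/n_\gamma}$, where one must substitute the trajectory condition before reducing the exponent modulo $2$.
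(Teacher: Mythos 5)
Your proof is correct and follows essentially the same route as the paper: the first identity is obtained by rotating so that the initial point sits at angle zero (the paper phrases this as rewriting $f$ in the shifted angular variable $\tilde\theta=\theta-\iota_\gamma$, which is the same device), applying lemma~\ref{lma:calc}, and collapsing the trigonometric cross terms, while the second part reduces to coprime $(n_\gamma,m_\gamma)$ by observing that a non-primitive trajectory repeats a primitive one and then evaluates $S_k(\gamma)$ from the trajectory condition. Your explicit parity computation for the sign and the check that the denominator $\sin(k\alpha_\gamma/2)$ cannot vanish in the first branch are details the paper leaves implicit, but they match its intent exactly.
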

\begin{proof}
The function $f$ may be written as a component of a Fourier series in the variable $\tilde{\theta}=\theta-\iota_\gamma$:
\begin{equation}
%\label{eq:}
f(r,\tilde{\theta}) = \tilde{a}_k(r)\cos(k\tilde{\theta})+\tilde{b}_k(r)\sin(k\tilde{\theta}).
\end{equation}
This corresponds to choosing coordinates so that $\iota_\gamma=0$, so that the problem reduces to the situation in lemma~\ref{lma:calc}. Due to elementary trigonometric identities the coefficients of this Fourier series relate to the ones in lemma~\ref{lma:Fourier} so that
%\begin{equation}%\label{eq:}
%\begin{split}
%\tilde{a}_k(r) &= a_k(r) \cos(k\iota_\gamma) + b_k(r) \sin(k\iota_\gamma) \quad\text{and}\\
%\tilde{b}_k(r) &= -a_k(r) \sin(k\iota_\gamma) + b_k(r) \cos(k\iota_\gamma).
%\end{split}
%\end{equation}
\begin{equation}%\label{eq:}
\begin{split}
a_k(r) &= \tilde{a}_k(r) \cos(k\iota_\gamma) - \tilde{b}_k(r) \sin(k\iota_\gamma) \quad\text{and}\\
b_k(r) &= \tilde{a}_k(r) \sin(k\iota_\gamma) + \tilde{b}_k(r) \cos(k\iota_\gamma).
\end{split}
\end{equation}
Lemma~\ref{lma:calc} indicates now that
\begin{equation}
%\label{eq:}
\begin{split}
\nbrt f(\gamma)
&=
S_k(\gamma)
(\cos(k\kappa_\gamma)\A_k \tilde{a}_k(z_\gamma)+\sin(k\kappa_\gamma)\A_k \tilde{b}_k(z_\gamma))
\\&=
S_k(\gamma)
(\cos(k\iota_\gamma)\A_k \tilde{a}_k(z_\gamma)-\sin(k\iota_\gamma)\A_k \tilde{b}_k(z_\gamma))
\\&=
S_k(\gamma)\A_k a_k(z_\gamma),
\end{split}
\end{equation}
which is the first part of the claim.

Let us now turn to the case $E=\{0\}$. Given $\gamma$, let $\tilde{\gamma}$ be such that
$n_{\tilde{\gamma}}=n_\gamma/\gcd(n_\gamma,m_\gamma)$,
$m_{\tilde{\gamma}}=m_\gamma/\gcd(n_\gamma,m_\gamma)$,
$\alpha_{\tilde{\gamma}}=\alpha_\gamma$,
and $\Tr(\tilde{\gamma})=\Tr(\gamma)$.
It is geometrically obvious that $\gamma$ contains $\gcd(n_\gamma,m_\gamma)$ copies of $\tilde{\gamma}$ and so $\brt f(\gamma)=\brt f(\tilde{\gamma})$.
%Thus considering those $\gamma\in\Gamma_{\{0\}}$ with coprime $n_\gamma$ and $m_\gamma$ is enough.
Thus it suffices to consider those $\gamma\in\Gamma_{\{0\}}$ with coprime $n_\gamma$ and $m_\gamma$.

Now that $n_\gamma\alpha_\gamma=2\pi m_\gamma$ and $\gcd(n_\gamma,m_\gamma)=1$, $k\alpha_\gamma\in2\pi\Z$ if and only if $k/n_\gamma$ is an integer. Thus the definition of $S_k(\gamma)$ together with the result $\nbrt f(\gamma)=S_k(\gamma)\A_k a_k(z_\gamma)$ gives the second part of the claim.
\end{proof}

Note that the functions $b_k$ do not appear in the above formulas for $\brt f(\gamma)$. This is due to the assumption that the broken ray is symmetric with respect to the normal to $\partial D$ at angle $0$, which causes functions antisymmetric with respect to this line to integrate to zero over the trajectory.

\subsection{Quasianalytic functions}

The class of functions for which our argument shows the uniqueness result of theorem~\ref{thm:open} consists of functions uniformly quasianalytic in the angular variable as defined below. There is, however, no implication that this is the optimal class in which the theorem should hold.

\begin{definition}
\label{def:qa}
Let $M=(M_n)_{n=0}^\infty$ be an increasing logarithmically convex sequence\footnote{The sequence $(M_n)_{n=0}^\infty$ is logarithmically convex if the sequence $(\log M_n)_{n=0}^\infty$ is convex.} of strictly positive real numbers. A sequence $(a_k)_{k=1}^\infty$ of real numbers is defined to be in the class $S^\#(M)$ if there is a constant $R>0$ such that for each $n$ it satisfies%\footnote{We use the convention that $0^0=1$, but it makes no real difference, since it is only the asymptotical behaviour of the sequence that matters.}
\begin{equation}
%\label{eq:}
\sum_{k=0}^\infty k^{2n}\abs{a_k}^2 \leq M_n R^n.
\end{equation}
The class $S^\#(M)$ is called a \emph{quasianalytic class of sequences} if the defining sequence $M$ satisfies
\begin{equation}
%\label{eq:}
\sum_{n=0}^\infty \sqrt{\frac{M_n}{M_{n+1}}}=\infty.
\end{equation}
A function $f:\R\to\R$ with a period $2\pi$ belongs to the class $C^\#(M)$ if it can be written as a uniformly convergent Fourier series as
\begin{equation}
%\label{eq:}
f(x) = \sum_{k=0}^\infty (a_k\cos(kx)+b_k\sin(kx))
\end{equation}
such that the sequences $(a_k)_{k=1}^\infty$ and $(b_k)_{k=1}^\infty$ are in the class $S^\#(M)$. If the class $S^\#(M)$ is quasianalytic, then $C^\#(M)$ is a \emph{quasianalytic class of functions}. Functions in such a class are \emph{quasianalytic functions}.

If a continuous function $f:\bar{D}\to\R$ is written as a Fourier series as in lemma~\ref{lma:Fourier}, it is \emph{uniformly quasianalytic in the angular variable} if for each $R\in(0,1]$ the collection of functions $\{f(r,\cdot):r\geq R\}$ belong to the same quasianalytic class.

In dimensions higher than 2, we define a function $f:\bar{B}^n\to\R$, $n\geq3$, to be uniformly quasianalytic in the angular variables if its restriction to any two dimensional plane intersecting the origin is uniformly quasianalytic in the sense defined in two dimensions.%\footnote{TODO: Could this be stated in terms of the coefficients of the function written as a series of multidimensional spherical harmonics? How can one control the surface gradients of spherical harmonics with the indices of the spherical harmonic? How do the Fourier coefficients of the restriction to a plane relate to the spherical harmonic coefficients?}
\end{definition}

The class $C^\#(M)$ of $2\pi$-periodic smooth functions is a quasianalytic class of functions in the sense defined above if and only if it is so in the classical sense: if $f$ is in a quasianalytic class and the function and all of its derivatives vanish at any point, then the function is identically zero. For this result and more discussion on periodic quasianalytic functions we refer to~\cite[Section~V.2]{book-katznelson}. %\footnote{Our $M_n$ and $R$ are square roots of the ones of Katznelson.}.
We only remark here that quasianalytic functions are smooth and all derivatives of the Fourier series (which can be taken term by term) converge uniformly to the corresponding derivative of the function.

Angular differentiation of a function uniformly quasianalytic in the angular variable may also be carried out term by term and the series converges uniformly, provided that the angular derivatives of all orders satisfy the \DL\ condition. This is why we make the assumption in theorem~\ref{thm:open}.

%If we had $\sum_kk^n\omega_{a_k|_{[r,1]}}(\delta)=\order(\log^{-1}\delta^{-1})$ for all $n\in\N,r\in(0,1)$, then $\partial_\ang^n f$ would be \DL\ (in annuli) and uqa for all $n$ if $f$ is. It follows from this requirement that $\sum_kk^n\aabs{a_k|_{[r,1]}}<\infty\forall n$; if we had also a bound for $\sum_kk^n\aabs{a_k|_{[r,1]}}^2$, uqa would follow.
%
%Also $\omega_{a_k|_{[r,1]}}(\delta)\leq C_\delta\aabs{a_k|_{[r,1]}}$ with $C_\delta=\order(\log^{-1}\delta^{-1})$ with $f$ uqa would be enough.

%The following lemma follows immediately from the definition.
%
%\begin{lemma}
%\label{lma:qa-Dini}
%A uniformly quasianalytic function is Dini-continuous in any annulus $\{x:\in\R^2:r<\abs{x}\leq1\}$ where $r>0$.%Not true even if each a_k is continuous?
%\end{lemma}

The important property of quasianalytic functions that motivates the use of this class is the following lemma.

\begin{lemma}
\label{lma:qa}
Let $(x_k)_{k=0}^\infty$ be a bounded sequence and $z\in(0,1)$.
If a uniformly quasianalytic function $f$ is written as a Fourier series as in lemma~\ref{lma:Fourier} and
\begin{equation}
%\label{eq:}
\sum_{k=0}^\infty (-k^2)^n x_k \A_k a_k(z) = 0
\end{equation}
for all $n=0,1,\dots$, then $x_k\A_k a_k(z)=0$ for each $k$.
\end{lemma}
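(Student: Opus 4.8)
The plan is to reduce the statement to a known fact about quasianalytic classes: a function in a quasianalytic class whose Fourier coefficients satisfy a certain moment condition must vanish, and to do this by building an auxiliary quasianalytic function out of the data $x_k \A_k a_k(z)$.

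First I would fix $z\in(0,1)$ and set $c_k = x_k \A_k a_k(z)$ for $k\geq 0$. The hypothesis says $\sum_{k=0}^\infty (-k^2)^n c_k = 0$ for every $n\geq 0$; equivalently, if $P$ is any polynomial then $\sum_{k=0}^\infty P(-k^2) c_k = 0$, since each such $P$ is a linear combination of the monomials $(-k^2)^n$. The key observation is that $(-k^2)^n$ is (up to sign) the value of the $2n$th derivative of $\cos(kx)$ at $x=0$, so the moment condition is really a statement about the vanishing of all even-order derivatives at the origin of the ``trial function'' $F(x) = \sum_{k=0}^\infty c_k \cos(kx)$.

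The main work is therefore to show that $F$ lies in a quasianalytic class. Here I would use that $f$ is uniformly quasianalytic in the angular variable: since $z\geq R$ for some $R\in(0,1]$, the sequence $(a_k(z))_k$ lies in a fixed quasianalytic class $S^\#(M)$, i.e. $\sum_k k^{2n}\abs{a_k(z)}^2 \leq M_n R_0^n$ for all $n$. The coefficients $\A_k a_k(z)$ differ from $a_k(z)$ by the bounded factor coming from $\abs{T_k}\leq 1$ and $\aabs{\A_k}\leq 2$ (lemma~\ref{lma:Abel}), and the $x_k$ are bounded by hypothesis, so the same type of bound $\sum_k k^{2n}\abs{c_k}^2 \leq M_n R_1^n$ holds with a possibly larger constant $R_1$. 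Hence $(c_k)_k \in S^\#(M)$ for the same defining sequence $M$, and since $\sum_n\sqrt{M_n/M_{n+1}}=\infty$ by assumption, $F\in C^\#(M)$ is a quasianalytic function. Now $F$ is smooth with all derivatives given by term-by-term differentiation (as recalled in the discussion after definition~\ref{def:qa}), so $F^{(2n)}(0) = \sum_k c_k (-1)^n k^{2n} = (-1)^n \sum_k (-k^2)^n \cdot(-1)^n c_k$; a moment's bookkeeping shows $F^{(2n)}(0)=0$ for all $n$, and $F^{(2n+1)}(0)=0$ automatically because $F$ is even. Thus $F$ and all its derivatives vanish at $0$, so by quasianalyticity (in the classical sense, equivalent to the sequence-space sense as noted in the paper) $F\equiv 0$, which forces every Fourier coefficient $c_k = x_k\A_k a_k(z)$ to be zero.

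I expect the main obstacle to be the bookkeeping establishing $(c_k)_k\in S^\#(M)$ with the \emph{same} sequence $M$: one must be careful that multiplying by the bounded factors $x_k$ and the Abel-transform weights only changes the constant $R$ in the definition and not $M$ itself, since quasianalyticity depends only on $M$. A secondary subtlety is justifying that $F$ is genuinely smooth with termwise-differentiable Fourier series; this is exactly the point the paper flags when it requires the \DL\ condition on all angular derivatives, and I would invoke that remark rather than reprove it. Everything else is routine: reducing the infinite family of moment conditions to ``all even derivatives at $0$ vanish'' and then applying the classical quasianalyticity theorem cited as~\cite[Section~V.2]{book-katznelson}.
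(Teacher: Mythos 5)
Your proposal is correct and follows essentially the same route as the paper: the paper's proof likewise forms the auxiliary even function $g(x)=\sum_{k}x_k\A_k a_k(z)\cos(kx)$, notes via lemma~\ref{lma:Abel}(3) and the uniform quasianalyticity over $r\geq z$ that its coefficient sequence lies in $S^\#(M)$, observes that the moment conditions kill all even derivatives at $0$ (odd ones vanish by evenness), and concludes $g\equiv0$ by quasianalyticity, forcing each coefficient to vanish. Your only deviations are cosmetic (e.g.\ attributing the termwise differentiability of $g$ to the \DL\ remark rather than to the rapid decay built into $S^\#(M)$), and they do not affect correctness.
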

\begin{proof}
The functions $a_k(r,\cdot)$, $r\geq z$, belong to the same quasianalytic class $C^\#(M)$. By lemma~\ref{lma:Abel}(3) we have $(\A_k a_k(z))_{k=0}^\infty\in S^\#(M)$.

Define a function $g:\R\to\R$ so that
\begin{equation}
%\label{eq:}
g(x)=\sum_{k=0}^\infty x_k \A_k a_k(z) \cos(kx).
\end{equation}
Now $g$ is quasianalytic and obviously its odd derivatives vanish at zero. By assumption also the function itself and its even derivatives vanish at zero, so $g$ must be identically zero. This proves the lemma.
\end{proof}

\section{Tomography from a singleton}
\label{sec:one}

%We now prove theorem~\ref{thm:one}. The proof also gives an explicit reconstruction method. This result is discussed in more detail after the proof.

We prove theorem~\ref{thm:one} in the special case when the unknown function $f$ satisfies the \DL\ condition. This result is not needed for the proof of the general case (which was presented in section~\ref{sec:one-pf}), but we present it since it is simple and uses the tools developed for an open set of tomography. Both proofs are constructive and the reconstructions are based on the `long trajectory limit', where $n_\gamma$ tends to infinity.

\begin{proposition}
\label{prop:var-one}
For $f\in\CDini$, $\brt f:\Gamma_E\to\R$ determines the zeroth Fourier component $a_0(r)=\fint_{S(0,r)}f\der\h^1$ when $E$ is a singleton.
\end{proposition}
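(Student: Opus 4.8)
The plan is to recover $a_0$ from the broken ray transform of $f\in\CDini$ by combining the long-trajectory limit from the proof of theorem~\ref{thm:one} with the Fourier-analytic machinery already developed for the open-set case. First I would reduce to studying a single Fourier component. Since $f\in\CDini$, lemma~\ref{lma:Fourier} gives a Fourier series $f_K\to f$ uniformly, and the continuity of $\brt$ (hence $\brt f_K\to\brt f$ uniformly) allows us to extract the contribution of the zeroth component to $\brt f(\gamma)$ in the limit. Concretely, apply corollary~\ref{cor:calc} with $k=0$: for $\gamma\in\Gamma_{\{0\}}$ with $\gcd(n_\gamma,m_\gamma)=1$ we get $\nbrt(a_0(\cdot))(\gamma)=n_\gamma\A_0 a_0(z_\gamma)$, and for the higher components $a_k,b_k$ with $k\geq1$ the contribution is $S_k(\gamma)\A_k a_k(z_\gamma)$ with $|S_k(\gamma)|\leq n_\gamma$, vanishing unless $n_\gamma\mid k$. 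Dividing by the normalization $n_\gamma d_\gamma$, the $k\geq1$ terms are controlled by $\sum_{n_\gamma\mid k}(\aabs{\A_k a_k}_\infty+\aabs{\A_k b_k}_\infty)$, which tends to zero as $n_\gamma\to\infty$ because $\A_k a_k,\A_k b_k$ are uniformly bounded (lemma~\ref{lma:Abel}(3), with the $\CDini$ tail estimate from lemma~\ref{lma:Fourier}). Hence along any sequence $\gamma_i$ with $n_{\gamma_i}\to\infty$ and $z_{\gamma_i}\to r$ one has $\brt f(\gamma_i)\to\tfrac{1}{2}\A_0 a_0(r)/\sqrt{1-r^2}=g(r)$, exactly as in equation~\eqref{eq:g-def}.

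Next I would argue that this limit identifies $\A_0 a_0$ on $[0,1)$ and hence $a_0$. The function $g$ of~\eqref{eq:g-def} is continuous on $[0,1)$ since $a_0\in C([0,1])$ and $\A_0$ preserves continuity (lemma~\ref{lma:Abel}(1)), so the argument of~\eqref{eq:Gg-limit} goes through verbatim: for each $r\in[0,1)$ pick $\gamma_i\in\Gamma_{\{0\}}^i$ with $z_{\gamma_i}\to r$, note $\brt f(\gamma_i)$ is determined by the data, and conclude $g(r)=\lim_i\brt f(\gamma_i)$ is determined. By~\eqref{eq:g-Abel} we recover $\A_0 a_0$ on $[0,1)$, extend to $[0,1]$ by continuity, and invert using the injectivity and inversion formula of lemma~\ref{lma:Abel}(4)--(5) to reconstruct $a_0$.

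The main obstacle is the interchange of limits: we must be sure that in passing from $\brt f$ to the zeroth Fourier component we may pass the $n_\gamma\to\infty$ limit through the (infinite) Fourier sum, which is precisely where the $\CDini$ hypothesis does real work. The clean way is to fix $K$, split $f=f_K+(f-f_K)$, use $\aabs{\brt}=1$ so that $\abs{\brt f(\gamma)-\brt f_K(\gamma)}\leq\aabs{f-f_K}_\infty$ uniformly in $\gamma$, then for the finite sum $f_K$ apply corollary~\ref{cor:calc} termwise and let $n_\gamma\to\infty$ (only finitely many terms survive the divisibility condition $n_\gamma\mid k$, and these have coefficients bounded by $S_k(\gamma)/(n_\gamma d_\gamma)\to 0$ since $d_\gamma$ is bounded below on $\Gamma_E^N$ away from $z=1$, or more simply $\abs{S_k(\gamma)}\le n_\gamma$ forces the normalized contribution to $0$). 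This yields $\limsup_{n_\gamma\to\infty,\,z_\gamma\to r}\abs{\brt f(\gamma)-g(r)}\leq \aabs{f-f_K}_\infty + \abs{g_K(r)-g(r)}$ where $g_K$ is built from $a_0$ of $f_K$; but the zeroth component of $f_K$ equals that of $f$ for $K\geq0$, so in fact $g_K=g$ and only the $\aabs{f-f_K}_\infty\to0$ term remains. Letting $K\to\infty$ closes the argument.

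\begin{proof}[Proof of proposition~\ref{prop:var-one}]
We may assume $E=\{0\}$, and by corollary~\ref{cor:calc} it suffices to consider $\gamma\in\Gamma_{\{0\}}$ with $\gcd(n_\gamma,m_\gamma)=1$. Write $f$ as in lemma~\ref{lma:Fourier}, so $f_K\to f$ uniformly; since $\aabs{\brt}=1$ we have $\abs{\brt f(\gamma)-\brt f_K(\gamma)}\leq\aabs{f-f_K}_\infty$ for all $\gamma$. Fix $K$. For $\gamma$ as above, corollary~\ref{cor:calc} and $|S_k(\gamma)|\leq n_\gamma$ give, with $g$ as in~\eqref{eq:g-def} and using~\eqref{eq:g-Abel},
\begin{equation}
\abs{\brt f_K(\gamma)-g(z_\gamma)}
\leq
\sum_{\substack{1\leq k\leq K\\ n_\gamma\mid k}}\frac{\abs{S_k(\gamma)}}{n_\gamma d_\gamma}\left(\abs{\A_k a_k(z_\gamma)}+\abs{\A_k b_k(z_\gamma)}\right)
+\abs{\brt (a_0(\cdot))(\gamma)-g(z_\gamma)},
\end{equation}
where $a_0$ denotes the zeroth Fourier component of $f$ (equal to that of $f_K$). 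The last term is at most $\eps$ once $n_\gamma$ is large, by the estimate~\eqref{eq:Gg-estimate} applied to the continuous function $a_0(\cdot)$; the sum is empty for $n_\gamma>K$. Hence, given $\eps>0$, choose $K$ with $\aabs{f-f_K}_\infty<\eps$; then for all $\gamma$ with $n_\gamma$ large enough we have $\abs{\brt f(\gamma)-g(z_\gamma)}<3\eps$. As $a_0\in C([0,1])$ and $\A_0$ preserves continuity (lemma~\ref{lma:Abel}(1)), $g$ is continuous on $[0,1)$. For any $r\in[0,1)$ choose $\gamma_i\in\Gamma_{\{0\}}^i$ with $z_{\gamma_i}\to r$; combining the above with continuity of $g$ yields $\lim_{i\to\infty}\brt f(\gamma_i)=g(r)$. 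Thus $g$, hence by~\eqref{eq:g-Abel} the transform $\A_0 a_0$ on $[0,1)$, is determined by $\brt f$; extending by continuity to $[0,1]$ and inverting via lemma~\ref{lma:Abel}(4)--(5) recovers $a_0$.
\end{proof}
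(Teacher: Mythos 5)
Your proof is correct and follows essentially the same route as the paper: truncate the Fourier series, use $\aabs{\brt}=1$ for the tail, note via corollary~\ref{cor:calc} that the modes $1\leq k\leq K$ contribute nothing once $n_\gamma>K$, and recover $\A_0 a_0$ in the long-trajectory limit before inverting with lemma~\ref{lma:Abel}. (A minor simplification: since $d_\gamma=2\sqrt{1-z_\gamma^2}$, the zeroth-mode term $\brt(a_0(\cdot))(\gamma)$ equals $g(z_\gamma)$ exactly, so the appeal to the estimate~\eqref{eq:Gg-estimate} there is unnecessary.)
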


\begin{proof}
As noted in corollary~\ref{cor:calc}, it is sufficient to consider those $\gamma\in\Gamma_E$ for which $\gcd(n_\gamma,m_\gamma)=1$.
By continuity of $f$ it is enough to reconstruct it outside the boundary $\partial D$.

Let $a_k$ and $b_k$ be as in lemma~\ref{lma:Fourier}. Let $g(r,\theta)=a_0(r)$ and
\begin{equation}
%\label{eq:}
f_K(r,\theta) = \sum_{k=1}^K (a_k(r)\cos(k\theta)+b_k(r)\sin(k\theta)).
\end{equation}
By corollary~\ref{cor:calc} $\brt g(\gamma)=d_\gamma^{-1}\A_0a_0(z_\gamma)$, and using this we show that $\brt f(\gamma)-\brt g(\gamma)\to0$ as $n_\gamma\to\infty$.

Let $\eps>0$. By lemma~\ref{lma:Fourier} there is $K_\eps$ such that $\aabs{f-g-f_{K_\eps}}<\eps$.
By corollary~\ref{cor:calc} $\brt f_{K_\eps}(\gamma)=0$ whenever $\gamma\in \Gamma_E^{K_\eps+1}$, whence
\begin{equation}
%\label{eq:}
\abs{\brt f(\gamma)-\brt g(\gamma)}
=
\abs{\brt(g+f_{K_\eps}-f)(\gamma)-\brt f_{K_\eps}(\gamma)}
\leq
\eps+0.
\end{equation}
This estimate holds for all $\gamma\in \Gamma_E^{K_\eps+1}$.

Fix any $w\in[0,1)$. There is a sequence $(\gamma_i)_{i\in\N}$ such that $n_{\gamma_i}\geq i$, and $z_{\gamma_i}\to w$ (and thus $d_{\gamma_i}\to 2\sqrt{1-w^2}$) as $i\to\infty$. The estimate obtained above gives thus $\lim_{i\to\infty}\brt g(\gamma_i)=\frac{1}{2}(1-w^2)^{-1/2}\A_0a_0(w)$, allowing one to reconstruct the function $\A_0a_0$ on $[0,1)$. Injectivity of the Abel transform (lemma~\ref{lma:Abel}) proves the claim. 
\end{proof}

%We now turn to the general case, where the assumption $f\in\CDini$ of proposition~\ref{prop:var-one} is replaced by $f\in C$ without any assumptions on the modulus of continuity.

%Proof of thm:one moved from here.

By corollary~\ref{cor:calc} and theorem~\ref{thm:one} we have the following for $f\in\CDini$ and a singleton $E$ (chosen to be at angle zero): $\brt f|_{\Gamma_E}=0$ if and only if $a_0=0$ and
\begin{equation}
%\label{eq:}
%\sum_{n_\gamma|k}(-1)^{km_\gamma/n_\gamma}\A_k a_k(z_\gamma)=0
\sum_{n_\gamma|k}(-1)^{km_\gamma/n_\gamma}\A_k a_k(z_\gamma)=0
\end{equation}
for all $\gamma\in\Gamma_E$ with $\gcd(n_\gamma,m_\gamma)=1$. 
%Here the sum is taken over all $k$ that have $n_\gamma$ as a factor.
Here the sum is over all $k$ with $n_\gamma|k$.

First, nothing can be said about the functions $b_k$. Furthermore, there are a number of functions $a_k$ such that $\A_k a_k(z_\gamma)=0$ whenever $n_\gamma|k$; by injectivity $\A_k(C_0([0,1]))$ is an infinite dimensional subspace of $C([0,1])$, and the space $\{f\in\A_k(C_0([0,1])):f(z_\gamma)=0\text{ when }n_\gamma|k\}$ has finite codimension in $\A_k(C_0([0,1]))$ (for each $k$, there are only finitely many $\gamma$ with $n_\gamma|k$) and therefore infinite dimension.
Hence the condition $\brt f|_{\Gamma_E}=0$ leaves great freedom to the functions $a_k$, $k\geq1$, and we can efficiently recover only $a_0$ from $\brt f|_{\Gamma_E}$.

One can only possibly image the unknown function $f$ in the set $\bigcup_{\gamma\in\Gamma_E}\Tr(\gamma)$. When $E$ is countable, this set is dense in the disk but has Hausdorff dimension one, so without continuity assumptions no recovery results can be expected. %In this sense theorem~\ref{thm:one} is optimal.

\section{Tomography from an open set}
\label{sec:open}

We denote by $C^{n,m}=C^n_\rad C^m_\ang$ the space of real valued functions on the punctured disk $\bar{D}^*=\bar{D}\setminus\{0\}$ which have $n$ continuous partial derivatives with respect to the radial and $m$ to the angular variable.
%In analogue, we denote the space of uniformly quasianalytic functions (definition~\ref{def:qa}) satisfying the Dini criterion by $C^{0,\qa}$.
The angular derivative is denoted by $\partial_\ang$. We require no regularity at the origin and correspondingly restrict $\brt f$ to $\Gamma_E^*=\{\gamma\in\Gamma_E:0\notin\Tr(\gamma)\}$ for $f\in C^{n,m}$.
Including or excluding the origin in the domain of the unknown function does not alter the results of theorems~\ref{thm:one} and~\ref{thm:open} besides reconstructibility at the origin.
%For a function in $C^{n,m}$ that extends continuously to the origin the values in $\bar{D}^*$ determine the value at the origin.

%In this notation theorem~\ref{thm:open} states that for $E\subset S_1$ open and $f\in C^{0,\qa}$ the condition $\brt f(\gamma)=0$ for all $\gamma\in\Gamma_E$ implies $f=0$. We begin the proof by a lemma, which also provides a tool to generalize the uniqueness result from $C^{0,\infty}$ to $C^{0,0}=C$, but our proof of theorem~\ref{thm:open} only works for the class $C^{0,\qa} \subsetneq C^{0,\infty}$.

%Let us define a rotation operator $\rot_\phi$, $\phi\in\R$, acting on functions so that $\rot_\phi f(r,\theta)=f(r,\theta+\phi)$.
Let a rotation operator $\rot_\phi$, $\phi\in\R$, act on functions so that $\rot_\phi f(r,\theta)=f(r,\theta+\phi)$.
Let another rotation operator $\rot_\phi$ act on trajectories instead of functions so that $\iota_{\rot_\phi\gamma}=\iota_{\gamma}+\phi$, $\kappa_{\rot_\phi\gamma}=\kappa_{\gamma}+\phi$ and other parameters describing $\gamma$ are unaltered. Geometrically $\rot_\phi$ simply rotates the trajectory by an angle $\phi$.

We can only show the uniqueness result of theorem~\ref{thm:open} for functions uniformly quasianalytic in the angular variable, but if it would hold for functions in $C^{0,\infty}$, the following lemma implies that it would also hold for the class $C^{0,0}$.
%The following lemma allows to generalize the uniqueness result from $C^{0,\infty}$ to $C^{0,0}=C$, but we can only show theorem~\ref{thm:open} for uniformly quasianalytic functions.

\begin{lemma}
\label{lma:rot}
For $E\subset\partial D$ open and any $n$, the following hold true:
\begin{enumerate}
	\item If the uniqueness result of theorem~\ref{thm:open} would hold for the class $C^{n,\infty}$, then it would also hold for $C^{n,0}$.
	\item If $f\in C^{n,m}$ and $\brt f(\gamma)=0$ for all $\gamma\in\Gamma_E^*$, then also $\brt\partial_\ang^i f(\gamma)=0$ for all $\gamma\in\Gamma_E^*$ and $i\in\N$, $i\leq m$.
	\item Suppose $f\in C^{0,0}$ and $\brt f(\gamma)=0$ for all $\gamma\in\Gamma_E^*$. Fix $\gamma\in\Gamma_E^*$ and $\delta>0$ such that $\rot_\phi\gamma\in\Gamma_E^*$ whenever $\abs{\phi}<\delta$. Then $\brt \rot_\phi f(\gamma)=0$ for all $\abs{\phi}<\delta$.
\end{enumerate}
%These conclusions hold true if the requirement that the continuous derivatives satisfy the Dini conditions is removed.
\end{lemma}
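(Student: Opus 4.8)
The plan is to establish the three parts in the order given, since part (3) is the substantive statement and the other two are either quick consequences or stepping stones. For part (2), the key observation is that $\brt$ commutes with angular differentiation on the relevant trajectories. More precisely, for a fixed $\gamma\in\Gamma_E^*$ with $\iota_\gamma=0$, the rotated trajectory $\rot_\phi\gamma$ has the same $\alpha_\gamma$, $z_\gamma$, $d_\gamma$, $n_\gamma$, $m_\gamma$ but endpoints shifted by $\phi$; geometrically $\Tr(\rot_\phi\gamma)$ is the rotation of $\Tr(\gamma)$ by $\phi$, so $\nbrt f(\rot_\phi\gamma) = \nbrt(\rot_{-\phi} f)(\gamma)$ (rotating the trajectory is the same as counter-rotating the function). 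Hence $\phi\mapsto\brt f(\rot_\phi\gamma)$ is, up to the constant normalization $n_\gamma d_\gamma$, exactly $\phi \mapsto \nbrt(\rot_{-\phi}f)(\gamma)$, and differentiating under the integral sign (justified because $f\in C^{n,m}$ so $\partial_\ang^i f$ is continuous on a neighborhood of the compact trace, for $i\le m$) gives $\frac{\der^i}{\der\phi^i}\big|_{\phi=0}\brt f(\rot_\phi\gamma) = (-1)^i\,\brt(\partial_\ang^i f)(\gamma)\cdot\frac{?}{?}$, i.e. up to sign the broken ray transform of $\partial_\ang^i f$. But by hypothesis $\brt f \equiv 0$ on $\Gamma_E^*$, and whenever $\gamma\in\Gamma_E^*$ with $E$ open there is $\delta>0$ with $\rot_\phi\gamma\in\Gamma_E^*$ for $\abs\phi<\delta$ (openness of $E$ keeps both endpoints in $E$ under small rotation, and $0\notin\Tr(\gamma)$ is an open condition), so $\phi\mapsto\brt f(\rot_\phi\gamma)$ is identically zero on $(-\delta,\delta)$; all its derivatives at $0$ vanish, proving $\brt(\partial_\ang^i f)(\gamma)=0$ for every such $\gamma$, hence for all $\gamma\in\Gamma_E^*$ since $\gamma$ was arbitrary.

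For part (3), the same setup applies but now $f\in C^{0,0}$ only, so we cannot differentiate. Instead, fix $\gamma\in\Gamma_E^*$ and $\delta>0$ as in the statement, and consider the function $h(\phi)=\brt\rot_\phi f(\gamma) = \nbrt(\rot_\phi f)(\gamma)/(n_\gamma d_\gamma)$. Using again that rotating the function by $\phi$ equals rotating the trajectory by $-\phi$, we get $h(\phi) = \brt f(\rot_{-\phi}\gamma)$, and since $\rot_{-\phi}\gamma\in\Gamma_E^*$ for $\abs\phi<\delta$ by hypothesis, and $\brt f$ vanishes on all of $\Gamma_E^*$, we obtain $h(\phi)=0$ for all $\abs\phi<\delta$. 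That is the claim; the entire content is the bookkeeping identity $\rot_\phi$ on functions $\leftrightarrow$ $\rot_{-\phi}$ on trajectories together with the observation that the set of admissible rotations is an open interval around zero.

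Part (1) then follows from part (2) by a mollification argument. Suppose theorem~\ref{thm:open} holds for $C^{n,\infty}$, and let $f\in C^{n,0}$ with $\brt f\equiv 0$ on $\Gamma_E^*$. We want $f\equiv 0$ on $\bar D^*$. Mollify $f$ in the angular variable: let $f_\eps = \psi_\eps *_\ang f$ where $\psi_\eps$ is a smooth $2\pi$-periodic mollifier in $\theta$; then $f_\eps\in C^{n,\infty}$, and $f_\eps = \int \psi_\eps(\phi)\,\rot_{-\phi}f\,\der\phi$. Applying $\brt$ (which is linear and continuous) and using part (3)'s identity $\brt\rot_{-\phi}f(\gamma)=\brt f(\rot_\phi\gamma)=0$ for all $\gamma\in\Gamma_E^*$ and all $\phi$ in a neighborhood of any admissible rotation — but here $\psi_\eps$ has full support on the circle. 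This is where care is needed: $\rot_\phi\gamma$ need not lie in $\Gamma_E^*$ for all $\phi$, only for $\abs\phi<\delta_\gamma$. The fix: for a fixed $\gamma\in\Gamma_E^*$ take $\eps<\delta_\gamma$ so $\psi_\eps$ is supported in $(-\delta_\gamma,\delta_\gamma)$; then $\brt f_\eps(\gamma)=\int_{\abs\phi<\delta_\gamma}\psi_\eps(\phi)\brt f(\rot_\phi\gamma)\,\der\phi = 0$. So $\brt f_\eps(\gamma)=0$ for every $\gamma\in\Gamma_E^*$ (choosing $\eps$ small depending on $\gamma$ — but $f_\eps$ itself is a fixed function, so more carefully: for each $\gamma$ and each sufficiently small $\eps$, $\brt f_\eps(\gamma)=0$; fixing a single small $\eps$ we only conclude $\brt f_\eps(\gamma)=0$ for those $\gamma$ with $\delta_\gamma>\eps$). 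To get $\brt f_\eps\equiv 0$ on $\Gamma_E^*$ for a single $\eps$, note instead that $\partial_\ang^i f_\eps$ makes sense and apply part (2) logic directly to $f_\eps$: since $\brt f \equiv 0$, part (3) gives $\brt\rot_{-\phi}f(\gamma)=0$ whenever $\rot_\phi\gamma\in\Gamma_E^*$, and integrating a mollifier against this over the admissible window still leaves, for $\gamma$ with $\delta_\gamma>\eps$, the value zero; then by part (2) applied to the smooth function $f_\eps$ we propagate to all of $\Gamma_E^*$. Once $\brt f_\eps\equiv 0$ on $\Gamma_E^*$ and $f_\eps\in C^{n,\infty}$, the assumed case of theorem~\ref{thm:open} gives $f_\eps\equiv 0$ on $\bar D^*$, and letting $\eps\to 0$ with $f_\eps\to f$ uniformly (since $f$ is continuous) yields $f\equiv 0$. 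The main obstacle is precisely this last bookkeeping: making sure a single mollified function $f_\eps$ has vanishing transform on the whole of $\Gamma_E^*$ rather than just on a $\gamma$-dependent subfamily, which is resolved by invoking part (2) to propagate from the ``most robust'' trajectories to all trajectories, or alternatively by a direct covering/limiting argument on $\Gamma_E^*$.
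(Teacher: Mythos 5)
Parts (2) and (3) of your argument are correct and essentially the paper's: everything rests on the identity $\brt f(\rot_{-\phi}\gamma)=\brt\rot_\phi f(\gamma)$, part (3) is immediate from it, and in part (2) the observation that $\phi\mapsto\brt f(\rot_\phi\gamma)$ vanishes identically near $\phi=0$ (openness of $E$; the origin is fixed by rotations) forces all $\phi$-derivatives at $0$, hence $\brt\partial_\ang^i f(\gamma)$, to vanish. (The stray factor ``$\tfrac{?}{?}$'' is unnecessary: since $\rot_\phi$ changes neither $n_\gamma$ nor $d_\gamma$, one gets cleanly $\partial_\phi^i\brt f(\rot_\phi\gamma)\big|_{\phi=0}=(-1)^i\brt\partial_\ang^i f(\gamma)$.)

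Part (1), however, has a genuine gap. You correctly identify the problem: for a fixed mollification width $\eps$ you can only conclude $\brt f_\eps(\gamma)=0$ for those $\gamma$ whose admissible rotation window exceeds $\eps$, i.e.\ whose endpoints lie at angular distance more than $\eps$ from $\partial D\setminus E$. But your proposed repair does not close it: part (2) transfers vanishing from a function to its angular derivatives \emph{on the same family of trajectories}; it has no mechanism for propagating vanishing from the subfamily $\{\gamma:\delta_\gamma>\eps\}$ to all of $\Gamma_E^*$, and indeed $\brt f_\eps$ cannot be expected to vanish on trajectories with endpoints within $\eps$ of $\partial D\setminus E$, because there the convolution averages in values $\brt f(\rot_\phi\gamma)$ for trajectories leaving $\Gamma_E$, about which nothing is known. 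The ``covering/limiting argument'' is not supplied. The correct (and simpler) fix, which is what the paper does, is to give up on all of $\Gamma_E^*$: the subfamily you do control is exactly $\Gamma_{E^\eps}^*$ for the shrunk set $E^\eps=\{a\in E: d(a,\partial D\setminus E)>\eps\}$, which is still open and nonempty for small $\eps$, so the assumed uniqueness result of theorem~\ref{thm:open} for $C^{n,\infty}$, applied with $E^\eps$ as the set of tomography, already yields $f_\eps\equiv0$; letting $\eps\to0$ and using $f_\eps\to f$ uniformly gives $f\equiv0$. With that replacement your part (1) becomes correct.
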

\begin{proof}
Each part is based on the following observation.  The change of the integral of $f$ over a trajectory is the same if the function is rotated clockwise as if the trajectory is rotated counterclockwise by the same angle $\phi$:
\begin{equation}
\label{eq:rot}
\brt f(\rot_{-\phi}\gamma)
=
\brt \rot_\phi f(\gamma).
\end{equation}

(1) Let us denote a shrinked set of tomography by
\begin{equation}
%\label{eq:}
E^\delta = \{a\in E:d(x,\partial D\setminus E)>\delta\}.
\end{equation}
For a sufficiently small $\delta>0$ the set $E^\delta$ is nonempty and open.

%Let $\eta:\R\to\R$ be a mollifier satisfying the following conditions: $\spt\eta\subset[-1,1]$, $\eta\in C^\infty$, and $\int_{-\infty}^\infty\eta\neq0$. Let us denote its scaling by $\eta_\eps(x)=\eps^{-1}\eta(x/\eps)$ for all $\eps\in(0,1]$.
Let $\eta:\R\to\R$ be the standard mollifier and let $\eta_\eps(x)=\eps^{-1}\eta(x/\eps)$.
%Let $\eps<\delta$; more conditions for the parameter $\eps$ will be given later.
Let $\eps<\delta$.

If $f\in C^{n,0}$ is such that $\brt f|_{\Gamma_E^*}=0$, then $\brt f(\rot_\phi\gamma)=0$ for all $\gamma\in\Gamma_{E^\delta}^*$ and $\abs{\phi}<\delta$. Thus for any $\gamma\in\Gamma_{E^\delta}^*$ we have
\begin{equation}
%\label{eq:}
\begin{split}
0
&=
\int_{-\infty}^\infty \eta_\eps(\phi) \brt f(\rot_{\phi}\gamma) \der\phi
\\&=
\int_{-\infty}^\infty \eta_\eps(\phi) \brt\rot_{-\phi} f(\gamma) \der\phi
\\&=
\brt(\eta_\eps*f)(\gamma),
\end{split}
\end{equation}
where the convolution is with respect to the angular variable of $f$. Therefore $\brt f|_{\Gamma_E^*}=0$ implies that $\brt (\eta_\eps*f)|_{\Gamma_{E^\delta}^*}=0$. If $f\in C^{n,0}$, then $\eta_\eps*f\in C^{n,\infty}$.

If theorem~\ref{thm:open} holds in the class $C^{n,\infty}$, then $\brt (\eta_\eps*f)|_{\Gamma_{E^\delta}^*}=0$ implies that $\eta_\eps*f=0$.
Since this holds for all $\eps\in(0,\delta)$ and $\eta_\eps*f\to f$ uniformly as $\eps\to0$, we find that $f=0$.
%Since for some $\eps\in(0,\delta)$ the map $f\mapsto \eta_\eps*f$ is injective, using such $\eps$ we find that indeed $f=0$.
%Hence it suffices to show that if $\eta_\eps*f=0$, then $f=0$, if $\eps\in(0,\delta)$ is chosen suitably.
%
%We define the functions $e_k(x)=e^{ikx}$ for all $k\in\Z$ and the inner product of two functions $f,g:\R\to\C$ of period $2\pi$ as $\braket{f}{g}=\int_0^{2\pi}f\bar{g}$. A simple calculation shows that $\braket{f*g}{e_k}=\braket{f}{e_k}\braket{g}{e_k}$. The set $\{e_k:k\in\Z\}$ is an orthogonal Hilbert basis~\cite[Theorem~4.2e]{book-baggett} for the space of $L^2$ functions on $\R/2\pi\Z$, so the convolution $f\mapsto \eta_\eps*f$ is injective if and only if $\braket{\eta_\eps}{e_k}\neq0$ for all $k\in\Z$.
%
%We find
%\begin{equation}%\label{eq:}
%\begin{split}
%\braket{\eta_\eps}{e_k}
%&=
%\int_{-\infty}^\infty \eta(x/\eps)e^{-ikx} \der x
%\\&=
%\eps\sqrt{2\pi}\F\eta(k\eps),
%\end{split}
%\end{equation}
%where $\F\eta$ is the Fourier transform of $\eta$. We must choose $\eps$ so that the set $\eps\Z$ contains no zeros of $\F\eta$.
%
%First, the assumption $\int_{-\infty}^\infty\eta\neq0$ implies $\F\eta(0)\neq0$. By the Paley-Wiener theorem~\cite[Theorem~7.22]{book-rudin} $\F\eta$ can be extended to an analytic function in the complex plane. Since it is not identically zero, it can have at most countably infinitely many zeros on the real axis. Therefore there is $c>0$ such that $0\notin\F\eta(c\Q)$; by choosing $\eps\in c\Q$ we indeed have $\braket{\eta_\eps}{e_k}\neq0$ for all $k\in\Z$.

(2) Suppose $m\geq1$ and $f\in C^{n,m}$. It is enough to show the result for the first derivative, from which the argument can be iterated.

Dividing \eqref{eq:rot} by $\phi$ and passing to the limit $\phi\to0$ we obtain
\begin{equation}
%\label{eq:}
\left.\partial_\phi\brt f(\rot_\phi\gamma)\right|_{\phi=0}
=
-
\brt\partial_\ang f(\gamma)
\end{equation}
for all $\gamma\in\Gamma_E^*$.
Since $\brt f$ vanishes on $\Gamma_E^*$, this implies that $\brt\partial_\ang f$ does, too.

(3) Trivial.
\end{proof}

%We remark that part~(1) of the preceeding lemma cannot be easily extended to uniformly quasianalytic functions; there is no nonzero quasianalytic mollifier $\eta:\R\to\R$ with compact support. The following proof relies heavily on quasianalyticity, and so cannot be easily generalized for smooth functions.

\begin{proof}[Proof of theorem~\ref{thm:open}]
We show that $\brt f(\gamma)=0$ for all $\gamma\in\Gamma_E$ implies $f=0$ if $f$ is uniformly quasianalytic in the angular variable and satisfies the \DL\ condition.

%Let us use the functions $e_k(x)=e^{ikx}$ for all $k\in\Z$ as in lemma~\ref{lma:rot}. For simplicity, we write the function $f$ as a complex Fourier series:
%\begin{equation}%\label{eq:}
%f(r,\theta) = \sum_{k\in\Z}c_k(r)e_k(\theta).
%\end{equation}
%For sufficiently small $\phi$ we have
%\begin{equation}%\label{eq:}
%\rot_\phi f(r,\theta) = \sum_{k\in\Z}e^{ik\phi}c_k(r)e_k(\theta).
%\end{equation}
Let $f$ be written as a Fourier series as in lemma~\ref{lma:Fourier}. By theorem~\ref{thm:one} we already know that $a_0=0$. Differentiating term by term, we have
\begin{equation}
%\label{eq:}
\partial_\ang^{2n} f(r,\theta) = \sum_{k=0}^\infty (-k^2)^n (a_k(r)\cos(k\theta)+b_k(r)\sin(k\theta))
\end{equation}
for any $n\in\N$.
We may choose coordinates so that the angle zero lies in $E$. Fix any $\gamma\in\Gamma_E$ such that $\iota_\gamma+\kappa_\gamma=0$; such a trajectory can be found since $E$ is open.

By corollary~\ref{cor:calc}, lemma~\ref{lma:rot} and the assumption of vanishing broken ray transform we have
\begin{equation}
%\label{eq:}
\sum_{k=0}^\infty (-k^2)^n S_k(\gamma) \A_k a_k(z_\gamma) = 0
\end{equation}
for $n=0,1,\dots$. Hence by lemmas~\ref{lma:S_k} and~\ref{lma:qa} we have that $S_k(\gamma)\A_k a_k(z_\gamma)=0$.\footnote{
This conclusion may also be drawn as follows.
%Write $f$ as a complex Fourier series: $f(r,\theta)=\sum_{k\in\Z}a_k(r)e^{ik\theta}$, where the coefficient functions $a_k$ may be complex.
Write $f$ as a complex Fourier series: $f(r,\theta)=\sum_{k\in\Z}a_k(r)e^{i k\theta}$.
%Define the function $h:\R\to\C$ by $h(\phi)=\nbrt\rot_\phi f(\gamma)=\sum_{k\in\Z}e^{ik\phi}S_{\abs{k}}(\gamma)\A_{\abs{k}} a_k(z_\gamma)$.
Define $h:\R\to\C$ by $h(\phi)=\nbrt\rot_\phi f(\gamma)=\sum_{k\in\Z}e^{i k\phi}S_{\abs{k}}(\gamma)\A_{\abs{k}} a_k(z_\gamma)$.
The sequences $(S_k(\gamma)\A_k a_{\pm k}(r))_{k=0}^\infty$ are quasianalytic sequences, whence $h$ is a quasianalytic function.
By part~(3) of lemma~\ref{lma:rot} $h$ vanishes on some open interval, and so by quasianalyticity it vanishes everywhere.
%Thus the Fourier coefficient $S_{\abs{k}}(\gamma)\A_{\abs{k}} a_k(z_\gamma)$ vanishes for every $k\in\Z$.
Thus each Fourier coefficient $S_{\abs{k}}(\gamma)\A_{\abs{k}} a_k(z_\gamma)$ is zero.
}

If $\iota_\gamma=-\kappa_\gamma\notin\pi\Q$, we have $S_k(\gamma)\neq0$ for all $k$ by lemma~\ref{lma:S_k} and so $\A_k a_k(z_\gamma)=0$ for all $k$.
The set of such $z_\gamma$ that $\iota_\gamma=-\kappa_\gamma\notin\pi\Q$ and $\gamma\in\Gamma_E$ is dense on $[0,1]$, so continuity of $\A_ka_k$ (lemma~\ref{lma:Abel}) implies that $\A_k a_k=0$. This, by the injectivity of $\A_k$ (lemma~\ref{lma:Abel}), implies that each $a_k$ is identically zero.

We have thus demonstrated that the function $f$ is antisymmetric with respect to the normal to $\partial D$ at angle zero. The same conclusion may be drawn with any line intersecting $E$ and the origin. If $f$ is antisymmetric with respect to two different lines going through the origin, $f$ vanishes everywhere. %This completes the proof.
\end{proof}

Notice that by proposition~\ref{prop:spt} (in dimension two) $f\in C$ with $\brt f=0$ vanishes in $\bar{D}\setminus\co(\partial D\setminus E)$. If $f$ is also  quasianalytic in the angular variable, it follows that $f$ vanishes outside the open disk of radius $d(0,\bar{D}\setminus\co(\partial D\setminus E))$ centered at the origin. The fact that $f$ also vanishes inside this disk, as shown above, is nontrivial. We also remark that we have not employed the support theorem in our proof of theorem~\ref{thm:open}.

\begin{proof}[Proof of corollary~\ref{cor:open}]
Take any two dimensional subspace $F$ of $\R^n$ such that it meets $E$. As noted in the proof of corollary~\ref{cor:one} at the end of section~\ref{sec:one-pf}, one ends up with the two dimensional broken ray transform in the disk $\bar{B}^n\cap F$. Since $E\cap F$ is open in the circle $S^{n-1}\cap F$, theorem~\ref{thm:open} guarantees that $\brt f$ restricted to those $\gamma$ that lie in $F$ uniquely determine $f$ in $\bar{B}^n\cap F$. Since $\R^n$ can be written as a union of such subspaces $F$, $\brt f$ determines $f$ everywhere.
\end{proof}

\section*{Acknowledgements}
The author is partly supported by the Academy of Finland (no 250~215). The author wishes to thank Mikko Salo for many discussions and insightful comments regarding this work and the referees for useful feedback.

\bibliographystyle{amsplain}
\bibliography{disk}

\end{document}